\algrenewcommand\algorithmicrequire{\textbf{Input:}}
\algrenewcommand\algorithmicensure{\textbf{Output:}}
\newcommand{\algmargin}{\the\ALG@thistlm}
\newlength{\whilewidth}
\algnewcommand{\parState}[1]{\State%
  \parbox[t]{\dimexpr\linewidth-\algmargin}{\strut #1\strut}}
\newtheorem{theorem}{Theorem}[]
\newtheorem{corollary}{Corollary}[theorem]
\newtheorem{lemma}[theorem]{Lemma}
\theoremstyle{definition}
\newcommand{\bytedance}{ByteDance Ltd., Zhonghang Plaza, No. 43, North 3rd Ring West Road, Haidian District, Beijing, China}
\newcommand{\uwa}{Department of Physics, The University of Western Australia, Perth, WA 6009, Australia}
\newcommand{\oxford}{Clarendon Laboratory, University of Oxford, Parks Road, Oxford OX1 3PU, United Kingdom}
\begin{document}

% A Convergence Path Learning framework
% A Systematic Improvable Convergence Path

%\title{A Systematic Improvablity Framework for simulation of molecules}
%\title{A Systematic Improvable Quantum Framework: \\ Enabling Efficient Simulation of Molecules with Shallow Quantum Circuit}
\title{Orbital Expansion Variational Quantum Eigensolver: \\ Enabling Efficient Simulation of Molecules with Shallow Quantum Circuit}
% \title{Systematically improved variational quantum eigensolver for molecules}
\author{Yusen Wu}
  \thanks{The first two authors contributed equally.}
  \affiliation{\bytedance}
  \affiliation{\uwa}

\author{Zigeng Huang}
\email{huangzigeng@bytedance.com}
\affiliation{\bytedance}

\author{Jinzhao Sun}
\affiliation{\oxford}

\author{Xiao Yuan}
\affiliation{Center on Frontiers of Computing Studies, Peking University, Beijing 100871, China}
\affiliation{School of Computer Science, Peking University, Beijing 100871, China}

\author{Jingbo B. Wang}
\affiliation{\uwa}

\author{Dingshun Lv}
\thanks{lvdingshun@bytedance.com}
\affiliation{\bytedance}

\begin{abstract}
In the noisy-intermediate-scale-quantum era, Variational Quantum Eigensolver (VQE) is a promising method to study ground state properties in quantum chemistry, materials science, and condensed physics.  However, general quantum eigensolvers are lack of systematical improvability, and achieve rigorous convergence is generally hard in practice, especially in solving strong-correlated systems. Here, we propose an Orbital Expansion VQE~(OE-VQE) framework to construct an efficient convergence path. The path starts from a highly correlated compact active space and rapidly expands and converges to the ground state, enabling simulating ground states with much shallower quantum circuits. We benchmark the OE-VQE on a series of typical molecules including H$_{6}$-chain, H$_{10}$-ring and N$_2$, and the simulation results show that proposed convergence paths dramatically enhance the performance of general quantum eigensolvers.

\end{abstract}

\maketitle

\section{Introduction}

Solving ground states of quantum systems is a natural application for quantum computers, which could potentially innovate the study of quantum chemistry, materials science, and many-body physics~\cite{chan2011density,motta2020determining, mcardle2020quantum,cao2018quantum, wu2021provable}. However, quantum techniques like quantum phase estimation which promises accurate chemical simulations requiring fault-tolerant quantum computers~\cite{bravyi2018quantum}, is beyond current quantum computers. In order to reduce the significant hardware demands required by universal quantum algorithms, the variational quantum eigensolver~(VQE) was proposed~\cite{o2016scalable, whitfield2011simulation, mcclean2016theory, barkoutsos2018quantum, romero2018strategies,lee2018generalized} and demonstrated on noisy-intermediate scale quantum~(NISQ) devices~\cite{preskill2018quantum, peruzzo2014variational, colless2018computation, hempel2018quantum, kandala2017hardware, google2020hartree}. 
While quantum hardware continue to steadily advance, limitations on deep high-fidelity circuit still exist~\cite{stilck2021limitations, wu2022estimating}. Therefore, the demand of quantum algorithms that try to make full use of the limited quantum resource is growing rapidly. Particularly, one may think: how to efficiently design a powerful quantum ansatz allowing accurate computation with much shallower circuit depth. In other words, find an efficient convergence path from the fixed initial state (such as Hartree-Fock state) to the exact ground state. Actually, there are exponential large number of paths between two states in the Bloch sphere, and the length of the optimal convergence path defines the circuit complexity of the ground state~\cite{nielsen2006quantum}. 

From different perspectives, previous works consider various strategies to reduce the quantum circuit depth in solving quantum chemical problems, including qubit-reduction methods~\cite{cade2020strategies, mineh2022solving, fujii2022deep, li2022toward, zhang2022variational, cao2022ab, tilly2021reduced, eddins2022doubling}, circuit depth optimization methods~\cite{grimsley2019adaptive, zhang2020mutual, tang2021qubit2, huang2022efficient, burton2022exact} and unitary couple cluster with its variant methods~\cite{o2016scalable, mcclean2016theory, barkoutsos2018quantum, romero2018strategies, cao2022progress, ryabinkin2020iterative}. 
Although such methods can be utilized to reduce quantum circuit depth, they suffer from a lack of systematic improvability when constructing a convergence path, that is the energy function may not decline steadily with the increase of quantum (or classical) computational resources. For example, the famous heuristic method ADAPT-VQE which proposed in~\cite{grimsley2019adaptive} may encounter the scenario where the single gradient-based criteria cannot guarantee to find the global minimum. This because general unitary operators lie in an extremely complex manifold, and fixed optimization strategy may fail. 
% Therefore, a general framework in straightforwardly systematic improving the convergence path is necessary. 
%where their essence attempt to find a relatively shorter path in the Bloch sphere. 
%Here, we go much further by using chemical insights to provide
%an efficient convergence path in simulating molecular systems, which essentially improves the performance of a general quantum eigensolver. 
%Therefore, how to efficiently select a convergence path enabling quantumsolver rapidly reaching to the destination is of significance. 

Here, we propose an Orbital Expansion VQE~(OE-VQE) framework to provide an efficient convergence path in simulating molecular systems, which dramatically improves the performance of a general quantum eigensolver and reduces quantum measurement complexity. The proposed convergence path is composed of two fundamental elements: (i) a good initial state inspired by chemical insights and (ii) a systematic improvable convergence direction to the ground state. In the proposed OE-VQE framework, a specially initial state, which is constructed from a very small active space composed by orbitals with significant correlations, is selected as the starting point. The rest orbitals are reformulated and then ranked by the low-scaling post-Hartree-Fock methods, such as the second-order (Moller-Plesset) perturbation theory~(MP2)~\cite{nusspickel2022systematic}. The initial active space expands steadily by iteratively appending ranked orbitals, and a reasonable convergence direction is thus constructed. Following this path, the accuracy of quantum eigensolver will be systematically improved step by step.

\begin{figure*}
\includegraphics[width=\textwidth]{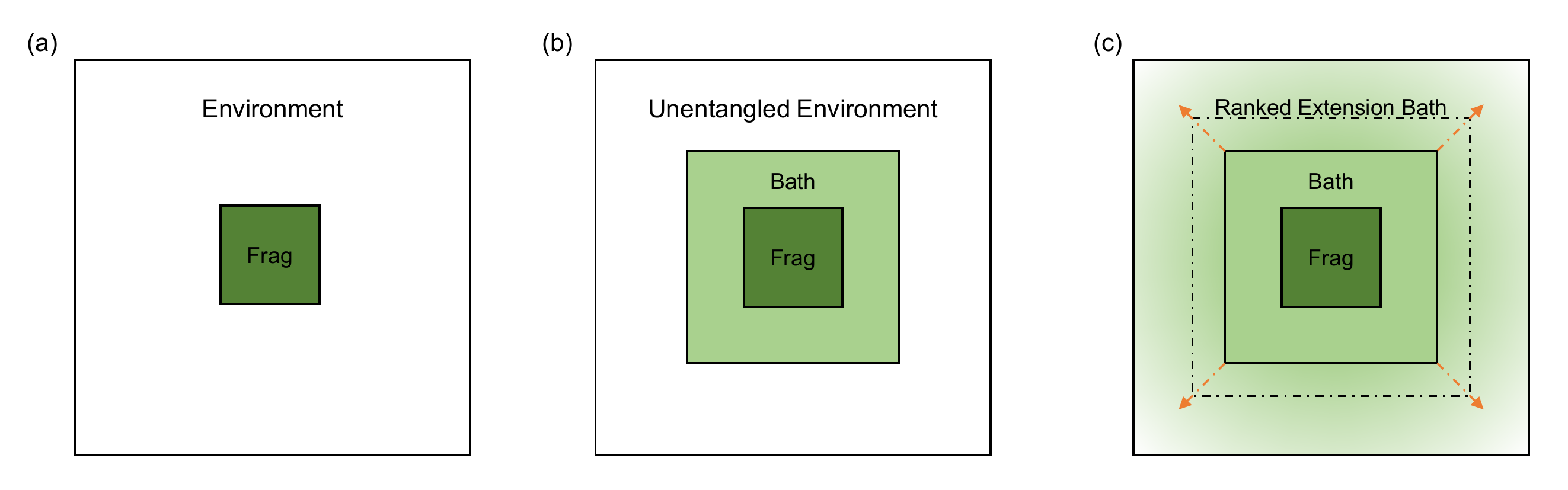}
\caption{Visualization of mentioned orbital sets in the Outline section. (a) There are $L_A$ LOs are selected into the fragment orbital set, and the rest of $L-L_A$ orbitals are named as environment set. General choices for fragment set are bonding orbitals or the orbitals in atom valence shell. (b) There are $L_B$ LOs with fractional occupied number are assigned into the bath orbital set. The fragment and bath set are termed as the Impurity. The $(L-L_A-L_B)$-scale unentangled-environment set contains core orbitals and virtural orbitals. (c) The ranked unentangled environment orbitals are gradually appended into the bath orbital set, until the bath set expands to the entire environment.}
\label{fig:vis}
\end{figure*}

The OE-VQE framework is demonstrated numerically for typical systems including hydrogen chain, hydrogen ring and nitrogen molecules, which are challenging strongly correlated systems in the dissociation distance. 
The numerical results show much higher accuracy for OE-VQE, which may only achievable with much deeper circuits and larger measurement complexity for other conventional quantum approaches. The proposed OE-VQE framework lies in the chemical insight that introduced as a novel dimension, paving the way for studying and analysis more advanced quantum eigensolvers.

\begin{figure*}
  \includegraphics[width=\textwidth]{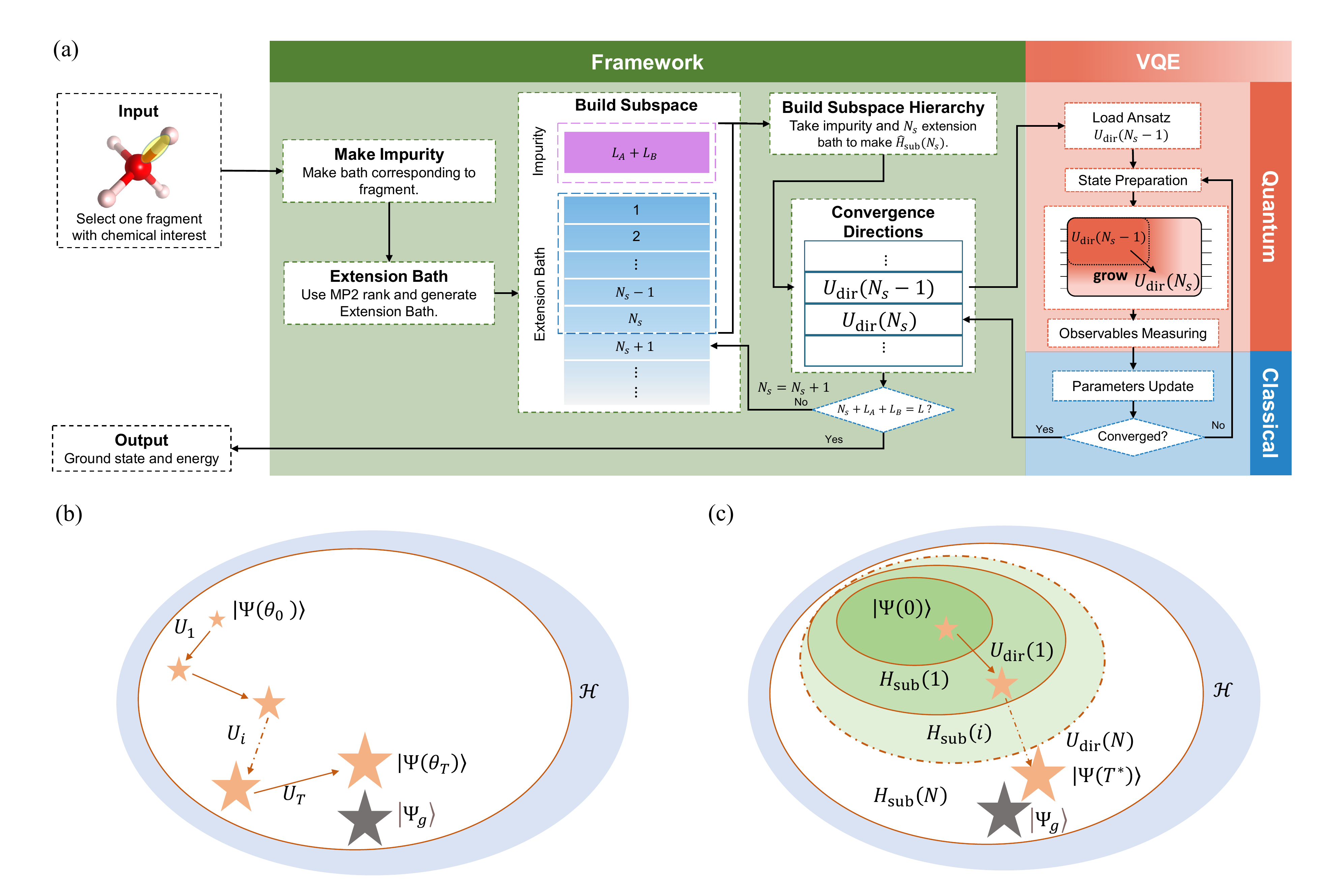}
  \caption{(a) Outline of the OE-VQE framework: For the outside loop (green panel), a systematic improvable convergence path is provided. The Impurity (fragment and bath) orbitals constructs a subspace $\hat{H}_{\rm sub}(0)$, and its ground state $|\Psi(0)\rangle$ is the starting point provided by OE-VQE framework. The environment orbitals are iteratively appended to the Impurity, and a subspace hierarchy $\{\hat{H}_{\rm sub}(N_s)\}_{N_s=1}^{N}$ is provided, where $N=L-L_A-L_B$. For the inside loop (red panel), each subspace determines a convergence path (quantum circuit) $U_{\rm dir}(N_s)$. The final stage of the OE-VQE framework will be an approximation to the ground state. (b) Visualization of the convergence path of a general quantum eigensolver without systematic improvability. Starting from $|\Psi(\theta_0)\rangle$, the quantum eigensolver ends up in $|\Psi(\theta_T)$ by following the path $U_1,U_2,...,U_T$. Here, $\mathcal{H}$ represents the whole Hilbert space. (c) Visualization of the convergence path of OE-VQE framework. The convergence direction is guaranteed by the subspace hierarchy.}\label{outline}
\end{figure*}

\section{Outline of the OE-VQE framework}
To clearly define the whole framework, several fundamental definitions and notations needed to be clarified before providing details. Here, we utilize $(p, q, r, s)$ to represent arbitrary Localized molecular Orbitals~(LO). LO is a kind of orthogonal orbital basis that inherits chemical characteristic from Molecular Orbital basis, meanwhile maintains a similar geometry structure to the Atomic Orbital basis. 
%Therefore, using bonding orbitals or the orbitals in atom valence shell as 
%LO is a suitable basis to select a fragment orbitals by chemical insight, likes bonding orbitals or the orbitals in atom valence shell.}
Then the electron Hamiltonian of a quantum chemical system under Born-Oppenheimer approximation can be formed as:
% \begin{align}
%     \hat{H}_e=E_{\rm nuc}+\sum\limits_{p,q}^Ld_{pq}\hat{a}_p^{\dagger}\hat{a}_q+\frac{1}{2}\sum\limits_{p,q,r,s}^Lh_{pqrs}\hat{a}_p^{\dagger}\hat{a}_q^{\dagger}\hat{a}_r\hat{a}_s,
% \end{align}
\begin{align}
    \hat{H}_e=E_{\rm nuc}+\sum\limits_{p,q}^Ld_{pq}\hat{a}_p^{\dagger}\hat{a}_q+\frac{1}{2}\sum\limits_{p,q,r,s}^Lh_{pqrs}\hat{a}_p^{\dagger}\hat{a}_q^{\dagger}\hat{a}_s\hat{a}_r,
\end{align}
where $E_{\rm nuc}$ is the nuclear repulsion energy, $d_{pq}$ ($h_{pqrs}$) represents single (double) electron integration, and $\hat{a}_p$ ($\hat{a}_p^{\dagger}$) denotes fermionic annihilation (creation) operator to the $p$-th orbital. 

The whole framework is composed of two fundamental phases, namely finding a \emph{starting point} and constructing convergence \emph{directions}.

\emph{Phase 1} starts from the density matrix $\bm D^{\rm HF}$ of Hartree-Fock state.  Firstly, $L_A$ LOs ($L_A \ll L$) are selected 
% from $\bm D^{\rm HF}$
based on chemical insights. These $L_A$ LOs are usually named as \emph{fragment}
and the rest orbitals are named as \emph{environment}. The bonding orbitals or the orbitals in atom valence shell are common choices for the fragment part. Then delete these $L_A$ LOs from $\bm D^{\rm HF}$ and diagonalize the rest of $\bm D^{\rm HF}$ to select $L_B$ orbitals with fractional-occupied particle number. These $L_B$ orbitals have the most significant correlations to the fragment part, which are usually named as \emph{bath}, and these selected $L_A+L_B$ orbitals are named as \emph{Impurity}. 
The remainder integral-occupied and unoccupied orbitals in environment, termed as \emph{core} and \emph{virtual} respectively, are unentangled with the impurity at the Hartree-Fock theory level. Therefore, core and virtual orbitals compose the \emph{unentangled environment}. The impurity orbitals form an initial active space $\hat{H}_{\rm sub}(0)$, and its ground state $|\Psi(0)\rangle$ is the starting point of the OE-VQE. 

\emph{Phase 2} focuses on constructing convergence directions (quantum circuits) $\{U_{\rm dir}(N_s)\}_{N_s=1}^{L-L_A-L_B}$ by leveraging the rest of $(L-L_A-L_B)$ unentangled-environment with MP2 experience. Here, the systematical improvable directions are guided by a \emph{subspace hierarchy} $\{\hat{H}_{\rm sub}(N_s)\}_{N_s=1}^{L-L_A-L_B}$, where each subspace $\hat{H}_{\rm sub}(N_s)$ is determined by the particle number exchange between the impurity and environment. Specifically, a Hartree-Fock method implements on the impurity orbitals to provide $L_{\rm occ}$ occupied and $(L_A+L_B-L_{\rm occ})$ unoccupied orbitals information. After that, perform the MP2 method on impurity occupied and virtual in environment
(impurity unoccupied and core in environment) subspaces, and the $(L-L_A-L_B)$ unentangled environment orbitals will be naturally ranked based on the particle change. Impurity combined with the first-$N_s$ unentangled environment orbitals formulate a $(L_A+L_B+N_s)$-dimensional quantum subspace $\hat{H}_{\rm sub}(N_s)$ which gradually convergences to $\hat{H}_e$ with the increase of $N_s$. The visualizations of fragment, bath, environment, unentangled-environment and ranked extension bath orbitals are illustrated as Fig.~\ref{fig:vis}.

%In this subspace extension process, the quantum computer outputs $|\Psi(N_s)\rangle$ to approximate the ground state for each $\hat{H}_{\rm sub}(N_s)$, and $|\Psi(N_s)\rangle$ will inherit the information from the subspace memory $A_{\rm sub}(N_s-1)$. 
Denote $N=L-L_A-L_B$, for the number of orbitals $N_s\in[1, N]$, the proposed OE-VQE of electron Hamiltonians $\hat{H}_e$ can be summarized into following steps:
\begin{enumerate}
   % \item Construct the start point $\hat{H}_{\rm sub}(0)$;
    \item Construct the subspace $\hat{H}_{\rm sub}(N_s)$;
    \item Initialize the quantum state 
    \begin{align}
    |\Psi_0(N_s)\rangle=\prod\limits_{m=1}^{L_{\rm occ}}\hat{a}_m^{\dagger}|0^{L_A+L_B}\rangle\otimes\prod\limits_{l=1}^{N_s}\hat{o}_{l}|0^{N_s}\rangle.
    \end{align}
If $l\in {\rm core}$, the operator $\hat{o}_l=\hat{a}_l^{\dagger}$, else  $\hat{o}_l=I_l$.
    \item Generate $|\Psi(N_s-1)\rangle=U_{\rm dir}(N_s-1)|\Psi_0(N_s)\rangle$ as the starting point in the $N_s$-th iteration. Here the quantum circuit $U_{\rm dir}(N_s-1)$ contains parameters and corresponding operators in the $(N_s-1)$-th iteration;
    \item Perform VQE programs on quantum device with Hamiltonian $\hat{H}_{\rm sub}(N_s)$ and $|\Psi(N_s-1)\rangle$, and using the optimized fermionic operators and corresponding parameters to update the convergence direction $$U_{\rm dir}(N_s-1)\mapsto U_{\rm dir}(N_s)$$ and output the ground state of $\hat{H}_{\rm sub}(N_s)$;
    \item If $N_s \neq N$, go back to step 1 and set $N_s = N_s + 1$. 
    %\item Extract subspace memory $A_{\rm sub}^{\epsilon}(N_s)$ based on Alg.~\ref{alg_local_energy}.
\end{enumerate}
Finally, the ground state of $\hat{H}_e$ can be approximated by
\begin{align}
    |\Psi_g\rangle=U_{\rm dir}(N)|\Psi_0(N)\rangle.
\end{align}
The whole procedure is visualized as Fig.~\ref{outline}.

\section{Technical Details of OE-VQE}
Here, elaborate technical details of OE-VQE framework are provided. We first introduce the method in constructing the starting point, that is the ground state of $\hat{H}_{\rm sub}(0)$. Then we introduce the workflow in 
showing how to efficiently build a convergence path for quantum eigensolvers.

\subsection{Construct the Starting Point}
%here are many theoretical and numerical evidences showing that, any wave function in the Hilbert space could be dramatically contracted into a small subspace of interests by Schmidt decomposition without any reduction. 
This starts from the Hartree-Fock state
\begin{align}
    |\Phi_e\rangle=\prod\limits_{\mu\in N_{\rm occ}}\hat{a}_{\mu}^{\dagger}|0^{\otimes L}\rangle,
\end{align}
which is a general reference of ground state for molecular systems. Here $N_{\rm occ}$ represents occupied molecular orbital set and $\{\hat{a}_{\mu}^{\dagger},\mu\in[|N_{\rm occ}|]\}$ represent creation operators. Suppose a group of orthogonal Localized-Orbital (LO) basis $\{\hat{a}_k^{\dagger},k\in[L]\}$ has been selected via using Intrinsic Atomic Orbital (IAO) methods~\cite{knizia2013intrinsic}, a $L\times L$ reduced density matrix 
\begin{align}
    \bm D_{kl}^{\rm HF}=\langle\Phi_e|\hat{a}_k^{\dagger}\hat{a}_l|\Phi_e\rangle
\end{align}
is obtained, in which $(k,l)$ represent orbitals index in LO basis. Suppose there are $L_A$ LOs are selected to compose the fragment subspace. To do this, the $\bm D^{\rm HF}$ should be reorganized by moving each row and column in form of
\begin{align}
	     \bm D^{\rm HF}=\left(\begin{array}{cc}
	        \bm D^{A}_{L_A\times L_A} & \bm D^{\rm inter}_{L_A\times (L-L_A)} \\
	        \bm D^{\rm inter\dagger}_{L_A\times (L-L_A)} &   \bm D^{B}_{(L-L_A)\times (L-L_A)}\\
	     \end{array}
	     \right),
\end{align}
where $\bm D^{A}_{L_A\times L_A}$ represents to the fragment subspace and $\bm D^{\rm B}_{(L-L_A)\times (L-L_A)}$ represents to the environment. 
To fully decouple the fragment $\bm D^{A}_{L_A\times L_A}$ with the environment, it is necessary to analyze the entanglement distribution in environment orbitals. Specifically, the eigenvalues of density matrix $\bm D^{B}_{(L-L_A)\times (L-L_A)}$ provide the particle number in the environment, that is
\begin{align}
    \bm D^{B}_{(L-L_A)\times (L-L_A)}=\sum\limits_{p=1}^{L-L_A}\lambda_p|B_p\rangle\langle B_p|=\bm U^{B}\bm\lambda^B\bm U^{B\dagger}.
\end{align}
Here the diagonal matrix $\bm\lambda^B=\sum_{p=1}^{L-L_A}\lambda_p|p\rangle\langle p|$, $\lambda_p\in[0, 2]$ represents the particle number on a new environment orbital basis $|B_p\rangle$ transformed from LO basis by the unitary operator $\bm U^{B}$. In our work, the eigenvector $|B_p\rangle$ is termed as Unentangled Environment Orbital~(UEO) basis. According to the distribution of $\lambda_p$, the environment orbitals can be separated into three segments, including $L_B$ bath orbitals, $L_{\rm core}$ core orbitals and $L_{\rm vir}$ virtual orbitals. Obviously, the relationship $$L=L_A+L_B+L_{\rm core}+L_{\rm vir}$$ holds. The bath orbitals entangled with the fragment will contribute all the particles between 0 and 2, while core and virtual orbitals contain 2 particles and 0 particles, respectively. Due to MacDonald's theorem~\cite{macdonald1933successive}, the relationship $L_A\geq L_B$ holds, and a $L\times L$ coefficient transformation matrix can be formulated as
\begin{align}
    \bm U^{\rm LO\mapsto UEO}=\left(\begin{array}{cc}
	        I^A_{L_A\times L_A} & \bm 0_{L_A\times (L-L_A)} \\
	        \bm 0_{L_A\times (L-L_A)} &   \bm U^{B}\\
	     \end{array}
	     \right),
\end{align}
where the last $(L-L_A)$ columns can be decomposed as
\begin{align}
	     \bm U^{\rm bath(LO\mapsto UEO)}_{L\times L_B}\oplus \bm U^{\rm core(LO\mapsto UEO)}_{L\times L_{\rm core}}\oplus \bm U^{\rm vir(LO\mapsto UEO)}_{L\times L_{\rm virt}},
\end{align}
and the first $L_A$ columns are denoted as
\begin{align}
    \left(\begin{array}{c}
	        I^A_{L_A\times L_A}\\
	        \bm 0_{L_A\times (L-L_A)}\\
	     \end{array}
	     \right)=\bm U^{\rm frag(LO\mapsto UEO)}_{L\times L_A}.
\end{align}
Therefore, the impurity part is completely decoupled with core and virtual orbitals in the Hartree-Fock level. Denote the unitary matrix
\begin{align}
    \bm U^{\rm imp (LO\mapsto UEO)}_{L\times (L_A+L_B)}=\bm U^{\rm frag(LO\mapsto UEO)}_{L\times L_A}\oplus \bm U^{\rm bath(LO\mapsto UEO)}_{L\times L_B},
\end{align}
then the initial subspace can be constructed by
\begin{align}
    \hat{H}_{\rm sub}(0)=\bm U^{\rm imp(LO\mapsto UEO)\dagger}_{L\times (L_A+L_B)}\hat{H}_e\bm U^{\rm imp(LO\mapsto UEO)}_{L\times (L_A+L_B)},
    \label{Eq:embedding}
\end{align}
and the starting point $|\Psi(0)\rangle$ is the ground state of $\hat{H}_{\rm sub}(0)$. Details in calculating Eq.~\ref{Eq:embedding} are provided in Appendix.~\ref{App-C}.

\subsection{Construct Convergence Directions}
The directions are a series of quantum circuits $\{U_{\rm dir}(N_s)\}_{N_s=1}^{N}$ which are guided by the subspace hierarchy $\{\hat{H}_{\rm sub}(N_s)\}_{N_s=1}^{N}$, where $N=(L-L_A-L_B)$. The subspace hierarchy is described by impurity with $N_s$ extension bath orbitals. Here,  extension bath orbitals are ranked by their contributions to the particle number exchanging in occupied and unoccupied orbitals between the impurity and unentangled environment. Therefore the high-effective MP2 is utilized to calculate the particle number exchanging, that is
%\HZG{is described by impurity and the $N_s$ extension bath. The \emph{extension bath} is the orbitals with new basis set in unentangled environment, and ranked by the entanglement between the impurity and the environment. The entanglement is considered by the particle number exchanging in occupied orbitals and unoccupied orbitals between the impurity and unentangled environment.}
%\HZG{Considering the only thing we need is orbitals entanglement rank but not the exact entanglement, MP2 is a very cost-effective post Hartree-Fock method to rank unentangled environment orbitals by computing the particle exchange on the impurity and environment, that is} 
\begin{align}
    \delta\bm\lambda^{\rm env}=\left(\lambda_1^{\rm vir},..., \lambda^{\rm vir}_{L_{\rm vir}},2-\lambda_1^{\rm core},...,2-\lambda^{\rm core}_{L_{\rm core}}\right),
\end{align}
where $L_{\rm vir}+L_{\rm core}=N$, $\lambda_{i}^{\rm vir}$ and $2-\lambda_j^{\rm core}$ represent particle number exchange from environment virtual orbitals (core orbitals) to impurity occupied orbitals (unoccupied orbitals), respectively. 
% The particle exchange number marks the contributions of environmental orbitals. 
Rank the particle number changes (from the largest to the smallest), these $N$ orbitals are reordered according to their entanglement contributions to the impurity, and the corresponding coefficient matrix
\begin{align}
    \bm U^{\rm LO\mapsto MUEO}=\left(\bm U^{\rm imp(LO\mapsto MUEO)}_{L\times (L_A+L_B)}, \bm U^{\rm env (LO\mapsto MUEO)}_{L\times N}\right)
\end{align}
can be efficiently computed (details refer to Appendix.~B), where MP2 induced unentangled environment orbital basis~(MUEO) is given by MP2 method. Details in calculating $\delta\bm\lambda^{\rm env}$ and $\bm U^{\rm LO\mapsto MUEO}$ are provided in the Appendix~B.

For $N_{s}\in[0,N]$ environment orbitals are selected, the $L\times(L_A+L_B+N_s)$ projector $P(N_s)$ is the first $L_A+L_B+N_s$ columns of $\bm U^{\rm LO\mapsto MUEO}$:
\begin{align}
   P(N_s)=\left(\bm U^{\rm imp(LO\mapsto MUEO)}_{L\times (L_A+L_B)}, \bm U^{\rm env (LO\mapsto MUEO)}_{L\times N_s}\right),
\end{align}
and the subspace is thus constructed by 
\begin{align}
    \hat{H}_{\rm sub}(N_s)=P^{\dagger}(N_s)\hat{H}_eP(N_s).
\end{align}
%The subspace $\hat{H}_{\rm sub}(N_s)$ contains $2L_A$ impurity orbitals to characterize the chemical centre and $N_s$ environmental orbitals to describe the interaction between impurity and environment, and thus it can provide an approximation to the exact ground state $|\Psi\rangle$ of $\hat{H}_e$.

The convergence direction $U_{\rm dir}(N_s)$ contians a series of Hermitian operators $\hat{\bm\tau}=(\hat{\tau_1},\hat{\tau_2},...)$ as well as their corresponding classical parameters $(\theta_1,\theta_2,...)$, which can produce a quantum trial state $|\Psi(N_s)\rangle$ in minimizing the energy function
\begin{align}
    E(N_s)=\langle\Psi(N_s)|\hat{H}_{\rm sub}(N_s)|\Psi(N_s)\rangle.
    \label{eq:energy}
\end{align}
The quantum trial state 
\begin{equation}
\begin{aligned}
    |\Psi(N_s)\rangle=U_{\rm dir}(N_s)U_{\rm dir}(N_s-1)|\Psi_0(N_s)\rangle,
\end{aligned}
\end{equation}
where
\begin{align}
    |\Psi_0(N_s)\rangle=\prod\limits_{m=1}^{L_{\rm occ}}\hat{a}_m^{\dagger}|0^{2L_A}\rangle\otimes\prod\limits_{l=1}^{N_s}\hat{o}_{l}|0^{N_s}\rangle
\end{align}
represents the lowest electron occupied state on the subspace $\hat{H}_{\rm sub}(N_s)$, the operator
\begin{align}
    \hat{o}_{l}=\left\{
    \begin{aligned}
    &\hat{a}_l^{\dagger}, \; {\rm if}\;  l\in{\rm core} \\
    &\hat{I}_l, \; \; {\rm if} \; l\in{\rm virt} 
    \end{aligned}
    \right.
\end{align}
only performs on the extension bath orbital, and $U_{\rm dir}(N_s-1)$ represents the trained direction (quantum circuit) in the $(N_s-1)$-th step. The essential idea in constructing $U_{\rm dir}(N_s)$ relies on selecting suitable Hermitian operators $\hat{\bm\tau}=(\hat{\tau}_1,\hat{\tau}_2,...)$ from the single- and double- excitation operator set
\begin{align}
    \left\{(\hat{a}_q^{\dagger}\hat{a}_p-\hat{a}_p^{\dagger}\hat{a}_q), (\hat{a}_p^{\dagger}\hat{a}_q^{\dagger}\hat{a}_r\hat{a}_s-\hat{a}_s^{\dagger}\hat{a}_r^{\dagger}\hat{a}_q\hat{a}_p)\right\}
\end{align}
where orbital indexes $(p,q,r,s)\in[L_A+L_B+N_s]$. Then a series of classical parameters $\bm\theta=(\theta_1,\theta_2,...)$ are obtained via minimizing energy function $E(N_s)$ in Eq.~\ref{eq:energy}.

Finally, the ground state energy of the molecule $\hat{H}_e$ can be calculated by
\begin{align}
    E_g = E_{\rm sub}(N_s) + E_{\rm core} + E_{\rm nuc},
\end{align}
where $E_{\rm sub}(N_s)=\min_{\bm\theta,\hat{\bm\tau}} E(N_s)$ is the ground state energy of the subspace $H_{\rm sub}(N_s)$, $E_{\rm core}$ is contributed by core orbitals in unentangled environment in MUEO basis, and $E_{\rm nuc}$ is the nuclear repulsion energy. Details refer to the Appendix.~\ref{energy_compute}.

\section{Enhance the performance of general quantum eigensolvers}
 Given an electron Hamiltonian $\hat{H}_e$, the method on selecting fermionic operators and parameters to construct a quantum ansatz have been widely studied \cite{o2016scalable, whitfield2011simulation, mcclean2016theory, barkoutsos2018quantum, grimsley2019adaptive, zhang2020mutual,tang2021qubit2,yordanov2021qubit}. We take a canonical ansatz as examples to show the OE-VQE framework can dramatically improve its performances.
\subsection{OE-ADAPT-VQE}
The essential idea of the OE-ADAPT-VQE is to approximate FCI by using a maximally compact sequence of operators in the operator pool $$\mathcal{P}(N_s)=P(L_A+L_B+N_s)-P(L_A+L_B+N_s-1),$$ 
where $P(L)$ contains single- and double- excitation operators on orbital indexes $[N]$. Each iteration starts from obtaining the parameter gradient by implementing quantum measurements to each $\hat{\tau}_i\in \mathcal{P}(N_s)$. Using Jordan-Wigner transformation, the fermionic operator $\hat{\tau}_i$ and $\hat{H}_{\rm sub}(N_s)$ can be encoded by Pauli operators, and the gradient information 
\begin{align}
    \frac{\partial E(N_s)}{\partial \theta}=\langle\Psi_{\rm init}(N_s)|e^{\theta\hat{\tau}^{\dagger}}\left[\hat{H}_{\rm sub}(N_s),\hat{\tau}\right]e^{\theta\hat{\tau}}|\Psi_{\rm init}(N_s)\rangle
\end{align}
can be efficiently estimated within an $\epsilon$-additive error by running Hardmard-Test algorithm $\mathcal{O}(\|\hat{H}_{\rm sub}(N_s)\|_1/\epsilon^2)$ times. The purpose of these gradient measurements is to determine the most suitable operator $\hat{\tau}_i\in \mathcal{P}(N_s)$ to grow the quantum ansatz at the current stage, as the operator with the largest gradient might have more contributions to the energy function. If the largest gradient information is larger than a threshold, the operator $\hat{\tau}_i$ will be appended to the quantum ansatz, and its corresponding parameter is obtained by minimizing the energy function
\begin{align}
    \theta_i^*=\arg \min\limits_{\theta_i}\langle\Psi_{\rm init}(N_s)|e^{\theta_i\hat{\tau}_i^{\dagger}}\hat{H}_{\rm sub}(N_s)e^{\theta_i\hat{\tau}_i}|\Psi_{\rm init}(N_s)\rangle,
\end{align}
where $|\Psi_{\rm init}(N_s)\rangle=U_{\rm dir}(N_s)|\Psi_0(N_s)\rangle$.
Repeat the above procedure several times, the selected operators and optimized parameters construct a new convergence direction $U_{\rm dir}(N_s)$. The whole procedure is summarized as Alg.~\ref{alg_si_adapt}.

\begin{algorithm}[H]
\caption{OE-ADAPT-VQE}\label{alg_si_adapt}
\begin{algorithmic}
\Require  subspace $\hat{H}_{\rm sub}(N_s)$, quantum measurement threshold $\epsilon$, direction $U_{\rm dir}(N_s-1)$
\Ensure New direction $U_{\rm dir}(N_s)$, energy $E_g$
\State Set $G_{\rm Vec}=\{\}$, $\Lambda_s=2\epsilon$, $U_{\rm dir}(N_s)=U_{\rm dir}(N_s-1)\otimes I^{\otimes 2}$
\While {$\Lambda_s>\epsilon$}
\For {operator $\hat{\tau}_i$ in $\mathcal{P}(N_s)$}
\State Calculate the gradient information $\frac{\partial E(N_s)}{\partial \theta_i}$;
\State $G_{\rm Vec}\cup\abs{\frac{\partial E(N_s)}{\partial \theta_i}}$
\EndFor
\State Select operators $\hat{\tau}^{*}=\arg\max\limits_{\tau}\left(G_{\rm Vec}\right)$, $\Lambda_s=\max\limits_{\tau}\left(G_{\rm Vec}\right)$
\State Grow ansatz $$|\Psi(N_s)\rangle=e^{0\hat{\tau}^{*}}U_{\rm dir}(N_s-1)|\Psi_0(N_s)\rangle$$
\State Minimize the energy $$E(N_s)=\min\limits_{\bm\theta}\langle\Psi(N_s)|\hat{H}_{\rm sub}(N_s)|\Psi(N_s)\rangle$$
\State Calculate the ground state energy of $\hat{H}_e$
$$E_g = E(N_s) + E_{\rm core} + E_{\rm nuc}$$
\State Update $\bm\theta$, $\tau^{*}$ in $U_{\rm dir}(N_s)$
\EndWhile
\noindent\Return {$U_{\rm dir}(N_s)$, $E_g$}

\end{algorithmic}
\end{algorithm}

%\begin{figure}
%  \includegraphics[width=0.9\textwidth]{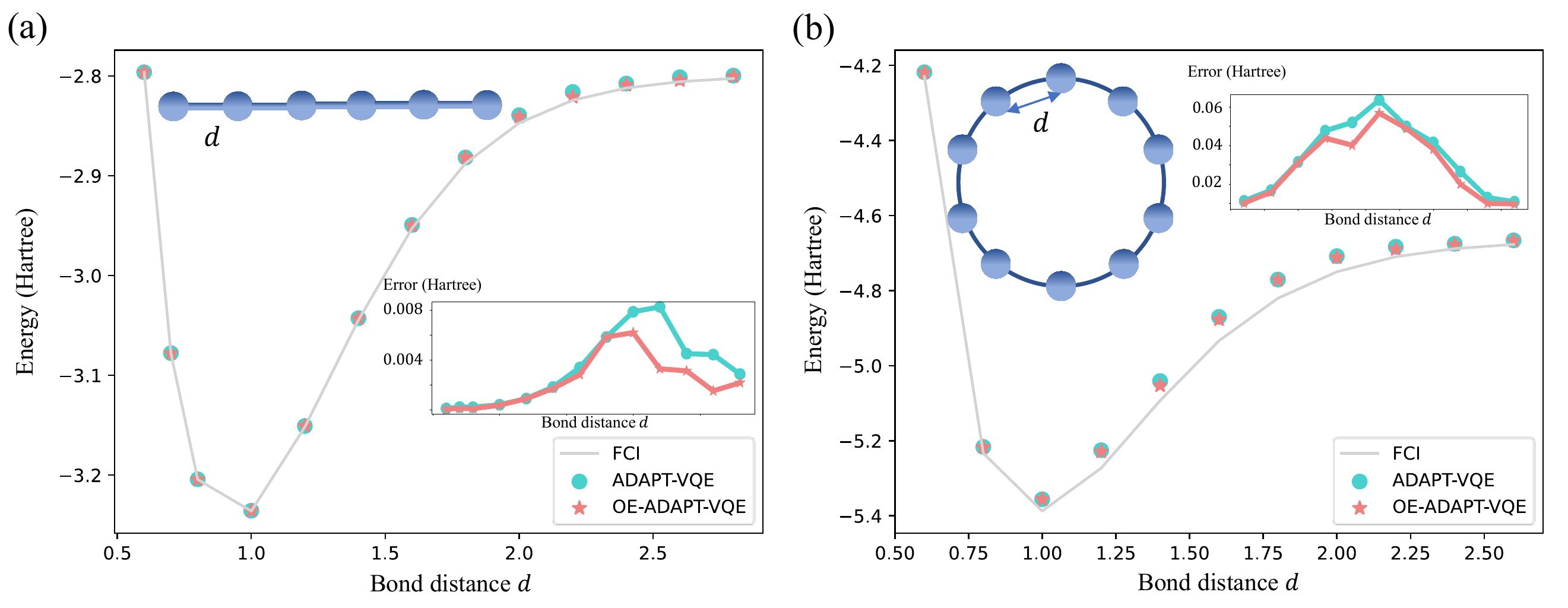}
%  \caption{Numerical results on H$_6$ chain molecule in sto-3g basis in the range of $d\in[0.8, 2.6]$.}\label{fig:h6}
%\end{figure}

%\subsection{Ground State Energy Computation}
%The ground state energy of the entire Hamiltonian $\hat{H}_e$ can be computed as
%\begin{align}
%    E_g(N_s)=E_{\rm nuc}+E(N_s,\Theta')+E_{\rm core},
%    \label{eq:energy}
%\end{align}
%in which $E_{\rm nuc}$ is the nuclear repulsion energy, $E(N_s,\Theta')$ is obtained from Alg.~\ref{alg_local_energy}, and 
%\begin{align}
%    E_{\rm core}=\sum\limits_{ij}\left(\frac{\hat{F}_{ij}+d_{ij}}{2}\right)D_{ij}^{\rm core}.
%\end{align}
%Here, $F$

\subsection{Measurement Complexity Reduction}
The barren plateaus phenomenon is ubiquitous in optimizing variational quantum algorithms if the information of the ansatz rapidly spreads to the whole Hilbert space, that is the gradient of the cost function would be exponentially vanishing with the increase of the number of qubit. Here, we report that the general ADAPT-VQE also suffers from barren plateaus phenomenon which dramatically reduces its efficiency.

For our general results, we consider a family of $L$-orbital electron Hamiltonian set $S=\{\hat{H}_e|{\rm Tr}(\hat{H}_e)\leq {\rm poly}(L)\}$. In the original ADAPT-VQE framework, the gradient of energy in terms of $\hat{\tau}_k\in P(L)$ can be calculated by
\begin{align}
    \frac{\partial E(\bm\theta)}{\partial\hat{\tau}_k}=\langle\Psi|\left[\hat{H}_e,\hat{\tau}_k\right]|\Psi\rangle,
\end{align}
where $\hat{H}_e\in S$, $|\Psi\rangle$ represents the reference state and $P(L)$ represents the fermionic operator pool. We show that such $\frac{\partial E(\bm\theta)}{\partial\hat{\tau}_k}$ will be exponentially vanishing in the average-case scenario.

\begin{theorem}
There exists a family of $L$-orbital electron Hamiltonian set $S=\{\hat{H}_e|{\rm Tr}(\hat{H}_e)\leq {\rm poly}(L)\}$, the gradient of ADAPT-VQEs suffers from barren plateaus phenomenon, that is
\begin{align}
    \mathbb{E}_{\hat{H}_e\sim S}\left[{\rm Var}_{\hat{\tau}_k\sim P(L)}\frac{\partial E(\bm\theta)}{\partial\hat{\tau}_k}\right]\leq \mathcal{O}\left(\frac{c}{4^L}\right),
\end{align}
where $E(\bm\theta)$ is the energy function of ADAPT-VQE, $P(L)$ is the operator pool and constant $c$ is independent to $L$. 
\label{theorem1}
\end{theorem}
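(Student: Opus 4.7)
My plan is to adapt the standard barren-plateau machinery, but with the randomness coming from the operator pool rather than from the variational unitary. First I would bound the variance by the second moment, writing
\begin{align}
\mathrm{Var}_{\hat{\tau}_k\sim P(L)}\!\!\left[\frac{\partial E}{\partial \hat{\tau}_k}\right]
\leq \mathbb{E}_{\hat{\tau}_k\sim P(L)}\!\left[\,\big|\langle\Psi|[\hat{H}_e,\hat{\tau}_k]|\Psi\rangle\big|^{2}\right],
\end{align}
and then vectorise the gradient via $\langle\Psi|[\hat{H}_e,\hat{\tau}_k]|\Psi\rangle=\mathrm{Tr}\!\left([\rho,\hat{H}_e]\hat{\tau}_k\right)$, where $\rho=|\Psi\rangle\langle\Psi|$. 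This isolates $\hat{\tau}_k$ so that the pool-average can be pushed onto a single operator-valued second moment $\Sigma_P:=\mathbb{E}_{\hat{\tau}_k}[\hat{\tau}_k\otimes\hat{\tau}_k^\dagger]$, and the remainder depends only on $\hat{H}_e$ and $\rho$.

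Next I would estimate $\|\Sigma_P\|$ using the explicit fermionic structure of $P(L)$. Two features are crucial: $P(L)$ contains $\mathcal{O}(L^{4})$ anti-Hermitian single and double excitation generators, each of operator norm $\mathcal{O}(1)$, and they act on a Hilbert space of dimension $2^{L}$. By the traceless and particle-number-conserving structure of the pool, the sum $\sum_k \hat{\tau}_k\otimes\hat{\tau}_k^\dagger$ decomposes into blocks labelled by occupation sectors, and in each sector one can bound the Frobenius norm uniformly. Normalising by $|P(L)|^{-1}=\mathcal{O}(L^{-4})$ yields a second-moment operator of size $\mathcal{O}(\mathrm{poly}(L)/2^{L})$. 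Substituting back gives
\begin{align}
\mathrm{Var}_{\hat{\tau}_k}\!\left[\frac{\partial E}{\partial \hat{\tau}_k}\right]
\;\lesssim\; \frac{\mathrm{poly}(L)}{2^{L}}\,\big\|[\rho,\hat{H}_e]\big\|_{F}^{2}.
\end{align}

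For the outer average I would use the trace assumption that defines $S$. Writing $\hat{H}_e=\tfrac{\mathrm{Tr}(\hat{H}_e)}{2^{L}}I+\tilde H$, the identity part drops out of the commutator, and a standard second-moment calculation over the ensemble $S$ (which uses only $\mathrm{Tr}(\hat{H}_e)\leq\mathrm{poly}(L)$ together with the natural invariance of $S$ under conjugation of the LO basis) gives $\mathbb{E}_{\hat{H}_e\sim S}\!\left[\|[\rho,\hat{H}_e]\|_{F}^{2}\right]\leq\mathrm{poly}(L)/2^{L}$. Combining the two $2^{L}$ factors produces the claimed $\mathcal{O}(c/4^{L})$ bound, with $c$ absorbing all polynomial prefactors.

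The main obstacle I expect is the pool-averaging step: because $P(L)$ has only polynomially many elements it is very far from a unitary $2$-design, so one cannot invoke Weingarten identities directly. The bound on $\|\Sigma_P\|$ must instead be extracted from the concrete commutation/anticommutation relations of the fermionic excitations, verifying that $\sum_k \hat{\tau}_k^{\dagger}M\hat{\tau}_k$ remains subleading compared with $2^{L}\mathrm{Tr}(M)\,I$ for the relevant $M$ built from $\hat{H}_e$ and $\rho$. A clean way to discharge this is probably a double-counting argument over the orbital indices $(p,q,r,s)$, but making the polynomial prefactor precise without over-counting the sector-diagonal contributions is the delicate point of the proof.
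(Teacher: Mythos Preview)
Your decomposition of the $4^{L}$ suppression into two separate $2^{L}$ factors---one from the pool second moment $\Sigma_P$ and one from the Hamiltonian average---does not work, and this is a genuine gap rather than a technicality. The pool $P(L)$ contains only $\mathcal{O}(L^{4})$ anti-Hermitian operators, each of which under Jordan--Wigner is a sum of $m\in\{4,8\}$ Pauli strings and therefore has Hilbert--Schmidt norm $\|\hat\tau_k\|_F^2 = m\,2^{L}$. Consequently the superoperator $\Sigma_P=\frac{1}{|P(L)|}\sum_k \hat\tau_k\otimes\hat\tau_k^\dagger$ has eigenvalues of order $2^{L}/L^{4}$ in the directions spanned by the $\hat\tau_k$ themselves; it is exponentially \emph{large}, not $\mathcal{O}(\mathrm{poly}(L)/2^{L})$. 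No amount of orbital-index double-counting or sector bookkeeping can reverse this scaling, because it is just the statement that polynomially many ``fat'' operators cannot form anything resembling a frame for the $4^{L}$-dimensional operator space. You correctly flagged the pool-averaging step as the delicate one, but the obstruction is fatal, not merely delicate.

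The paper's route avoids this entirely by putting \emph{all} of the randomness into the Hamiltonian ensemble: it explicitly constructs $S=\bigcup_{\bm\lambda}T_{\bm\lambda}$ with $T_{\bm\lambda}=\{V^{\dagger}H_0(\bm\lambda)V:\ V\ \text{from a unitary }2\text{-design}\}$, and then computes $\mathbb{E}_{V}\big[\langle\Psi|[V^\dagger H_0 V,\hat\tau_k]|\Psi\rangle^{2}\big]$ directly via the Weingarten/2-design identities. The factor $1/(d^{2}-1)=1/(4^{L}-1)$ falls out of the Haar integral in one step; the pool operator enters only through the variance-like quantity $\Delta(\hat\tau_k,|\Psi\rangle)=\langle\Psi|\hat\tau_k|\Psi\rangle^{2}-\langle\Psi|\hat\tau_k^{2}|\Psi\rangle$, which is trivially $\mathcal{O}(1)$ because $\hat\tau_k$ is a bounded sum of Paulis. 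So the key idea you are missing is that the ``invariance of $S$ under conjugation'' must be promoted to a full 2-design assumption on the Hamiltonian ensemble, and once you do that, no suppression from the pool side is needed at all.
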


The above theorem shows that the gradient fluctuates around the mean value with an exponentially small amplitude, and distinguish the largest gradient from the operator pool $P(L)$ thus requires exponentially amount of quantum measurement. In the following, we show that OE-VQE framework can relieve this phenomenon to some extent, and thus enhances practicability of popular heuristic quantum eigenslovers. 

\begin{corollary}
Given a $L$-orital electron Hamiltonian $\hat{H}_e$, suppose the OE-VQE chooses $L_A$ fragment orbitals and $L_B$ bath orbitals, and $M_{\rm OE-ADAPT}$ ($M_{\rm ADAPT}$) represent the quantum measurement complexity of OE-ADAPT-VQE (ADAPT-VQE) in approximating the ground state of $\hat{H}_e$, then the relationship
\begin{align}
   \frac{M_{\rm OE-ADAPT}}{M_{\rm ADAPT}}= \mathcal{O}\left(\frac{1}{(L-L_A-L_B)}\right).
\end{align}
holds.
\label{corollary1}
\end{corollary}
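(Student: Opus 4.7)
The plan is to separate the total measurement cost of each outer-loop iteration into two factors --- the size of the operator pool whose gradients must be estimated, and the number of shots per operator needed to distinguish the maximum gradient from the typical noise scale --- and then compare the sum over OE-ADAPT-VQE's outer iterations $N_s=0,1,\ldots,N$ against the single-scale analysis of ADAPT-VQE on the full $L$-orbital system.

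For the denominator I would invoke Theorem~\ref{theorem1} directly: on the full Hamiltonian the operator-wise variance of $\partial E/\partial\hat\tau_k$ averages to $\mathcal{O}(c/4^L)$, so by a Chebyshev/Hoeffding-style sample-complexity bound separating the maximum gradient from its pool peers costs $\mathcal{O}(4^L/c)$ shots per operator. Multiplying by $|P(L)|=\mathcal{O}(L^4)$, the count of single- plus double-excitation operators on $L$ orbitals, yields $M_{\rm ADAPT}=\mathcal{O}(L^{4}\cdot 4^{L}/c)$ per selection round.

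For the numerator I would re-apply Theorem~\ref{theorem1} to each subspace Hamiltonian $\hat H_{\rm sub}(N_s)$, which lives on $L_A+L_B+N_s$ orbitals: the corresponding gradient variance is $\mathcal{O}(c/4^{L_A+L_B+N_s})$, so the per-operator shot count is $\mathcal{O}(4^{L_A+L_B+N_s}/c)$. Because $\mathcal{P}(N_s)=P(L_A+L_B+N_s)-P(L_A+L_B+N_s-1)$ retains only excitations that touch the newly appended orbital, a direct $(L_A+L_B+N_s)^4-(L_A+L_B+N_s-1)^4$ count gives $|\mathcal{P}(N_s)|=\mathcal{O}((L_A+L_B+N_s)^{3})$, one power of the orbital count less than the full pool. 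Summing $|\mathcal{P}(N_s)|\cdot 4^{L_A+L_B+N_s}/c$ over $N_s=0,\ldots,N$, the geometric blow-up in $4^{L_A+L_B+N_s}$ makes the final iteration dominate, so $M_{\rm OE\text{-}ADAPT}=\mathcal{O}(L^{3}\cdot 4^{L}/c)$. Dividing cancels the common $4^{L}$ and leaves $\mathcal{O}(L^{3}/L^{4})=\mathcal{O}(1/(L-L_A-L_B))$ whenever $L_A+L_B=\mathcal{O}(1)$ relative to $L$.

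The main obstacle will be transferring Theorem~\ref{theorem1} from the original ensemble $S$ of full-space Hamiltonians to the family of subspace Hamiltonians $\{\hat H_{\rm sub}(N_s)\}$: one must verify that the projection of Eq.~\ref{Eq:embedding} and the MUEO rotation preserve the polynomial trace condition $\tr(\hat H_{\rm sub}(N_s))\le\operatorname{poly}(L_A+L_B+N_s)$, and that the ensemble structure underlying the variance average in Theorem~\ref{theorem1} carries over to this projected family. A secondary concern is bookkeeping in the inner while-loop of Algorithm~\ref{alg_si_adapt}: one needs to argue that the number of operator-selection rounds inside a single outer iteration adds at most a polylogarithmic factor in the threshold $\epsilon$, so that the dominant $N_s=N$ contribution still controls the asymptotic ratio $\mathcal{O}(1/(L-L_A-L_B))$.
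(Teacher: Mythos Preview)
Your route to the $\mathcal{O}(1/(L-L_A-L_B))$ saving differs from the paper's in the mechanism that produces the factor. You extract it from the \emph{reduced} pool $\mathcal{P}(N_s)$ having size $\mathcal{O}((L_A+L_B+N_s)^{3})$, one power below the full pool, so that the geometric sum $\sum_{N_s}(L_A+L_B+N_s)^{3}4^{L_A+L_B+N_s}$ collapses onto its last term $L^{3}4^{L}$; dividing by the ADAPT per-round cost $L^{4}4^{L}$ gives $1/L$, which you then identify with $1/(L-L_A-L_B)$ under the extra hypothesis $L_A+L_B=\mathcal{O}(1)$. The paper instead keeps the \emph{full} pool size $i^{4}$ at each subspace level $i=L_A+L_B+N_s$, but posits that the $K$ total operator selections needed for the ground state are split \emph{evenly} across the $N=L-L_A-L_B$ outer iterations, i.e.\ $K/N$ inner rounds per step; the prefactor $K/N$ is then carried outside the sum $\sum_{i=L_A+L_B}^{L}i^{4}4^{i}$, which is bounded by the integral $\int_{L_A+L_B}^{L}x^{4}e^{x\ln 4}\,dx<L^{4}4^{L}/\ln 4$, and the ratio $M_{\rm OE\text{-}ADAPT}/M_{\rm ADAPT}\le 1/((L-L_A-L_B)\ln 4)$ drops out directly, with no smallness condition on $L_A+L_B$.

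Your mechanism is arguably closer to Algorithm~\ref{alg_si_adapt} as written (which does loop only over $\mathcal{P}(N_s)$), but it costs you the extra $L_A+L_B=\mathcal{O}(1)$ assumption and leaves the inner while-loop round count as an unresolved item --- as you yourself flag. The paper's $K/N$ even-split assumption handles that bookkeeping in one stroke, though it is itself asserted rather than argued. Your first ``main obstacle'' --- transferring Theorem~\ref{theorem1} to the projected family $\{\hat H_{\rm sub}(N_s)\}$ --- is real, but the paper does not address it either: the proof simply applies the $4^{i}$ shot scaling at level $i$ without checking the ensemble or trace hypotheses for the subspace Hamiltonians.
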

Here, $M_{\rm OE-ADAPT}$ ($M_{\rm ADAPT}$) represents the required measurement complexity in estimating gradient information in the whole VQE procedure.
The insight of Corollary~\ref{corollary1} relies on the orbitals with most correlations have been characterized by small-scale subspaces, and operators $\tau\in P(L)$ only have small contributions to the ground state energy. This reduces the required number of operators $\tau\in P(L)$ and results in less quantum measurement complexity. 

\begin{figure*}
  \begin{center}
  \includegraphics[width=\textwidth]{}
  \caption{Numerical results to illustrate the ground state energy curve for (a) H$_6$ chain molecule and (b) H$_{10}$ ring molecule in sto-3g basis. Embedded subgraphs show the energy error by ADAPT-VQE algorithm and OE-ADAPT-VQE algorithm.}\label{fig:h6}
  \end{center}
\end{figure*}

\vspace{0.2cm}
\section{Simulation results}
In this section, we explore the performance of OE-VQE framework with a few molecular systems, including $\rm H_6$ chain, $\rm H_{10}$ ring and $\rm N_2$ in sto-3g basis. The presented molecules are prototypical strongly correlated systems in the dissociation scenario which can not be accurately described by previous quantum methods, and OE-VQE framework shows better performance on such strongly correlated systems. Here, we utilize Pyscf~\cite{sun2018pyscf} for the molecular integral calculation and HF method calculation, and the FQE~\cite{rubin2021fermionic} is used to simulate Fermionic operator calculation. All iteration steps utilized the Broyden-Fletcher-Goldfarb-Shannon~(BFGS) algorithm within Scipy to tune parameters~\cite{virtanen2020scipy}.

\subsection{The Hydrogen chain and Hydrogen ring}
While CCSD methods are exceptionally accurate for equilibrium geometries, it often fails for out-of-equilibrium geometries such as molecules with low-lying excited states. As typical cases, understanding Hydrogen chain and Hydrogen ring is critical for predicting many chemical properties. Here, we consider the ground states of $\rm H_6$ chain and $\rm H_{10}$ ring molecules which are believed to be strongly correlated systems in the dissociation distance. The $\rm H_6$ chain molecule and H$_{10}$-ring molecule are encoded as qubit Hamiltonians with $12$ qubits and $20$ qubits, respectively. To show the performance of OE-VQE framework, we only append $100$ operators to OE-ADAPT-VQE and ADAPT-VQE to test their performances. Fig.~\ref{fig:h6} (a) and (b) show that OE-ADAPT-VQE provides higher accuracy compared with ADAPT-VQE in both Hydrogen chain and ring models, especially in the dissociation domain ($d\in [2.0\mathring{{\rm A}}, 2.6\mathring{{\rm A}}]$).
Compared with $\rm H_6$ chain, $\rm H_{10}$ ring is a much more difficult molecule, since its ring-like geometry structure enables each atom sharing same chemical properties~\cite{motta2020ground}, which seriously troubles the selection of fragment orbitals based on chemical insights. Even in this scenario, OE-VQE still obtains a higher accuracy compared with original ADAPT-VQE by only using $100$ operators and less measurement complexity. This also explains relatively larger error in Fig.~\ref{fig:h6} (b). Simulation results clearly show that the OE-VQE framework can capture more valuable correlations from the active space which provides a better convergence path to the ADAPT-VQE method.

\begin{figure*}
  \includegraphics[width=0.95\textwidth]{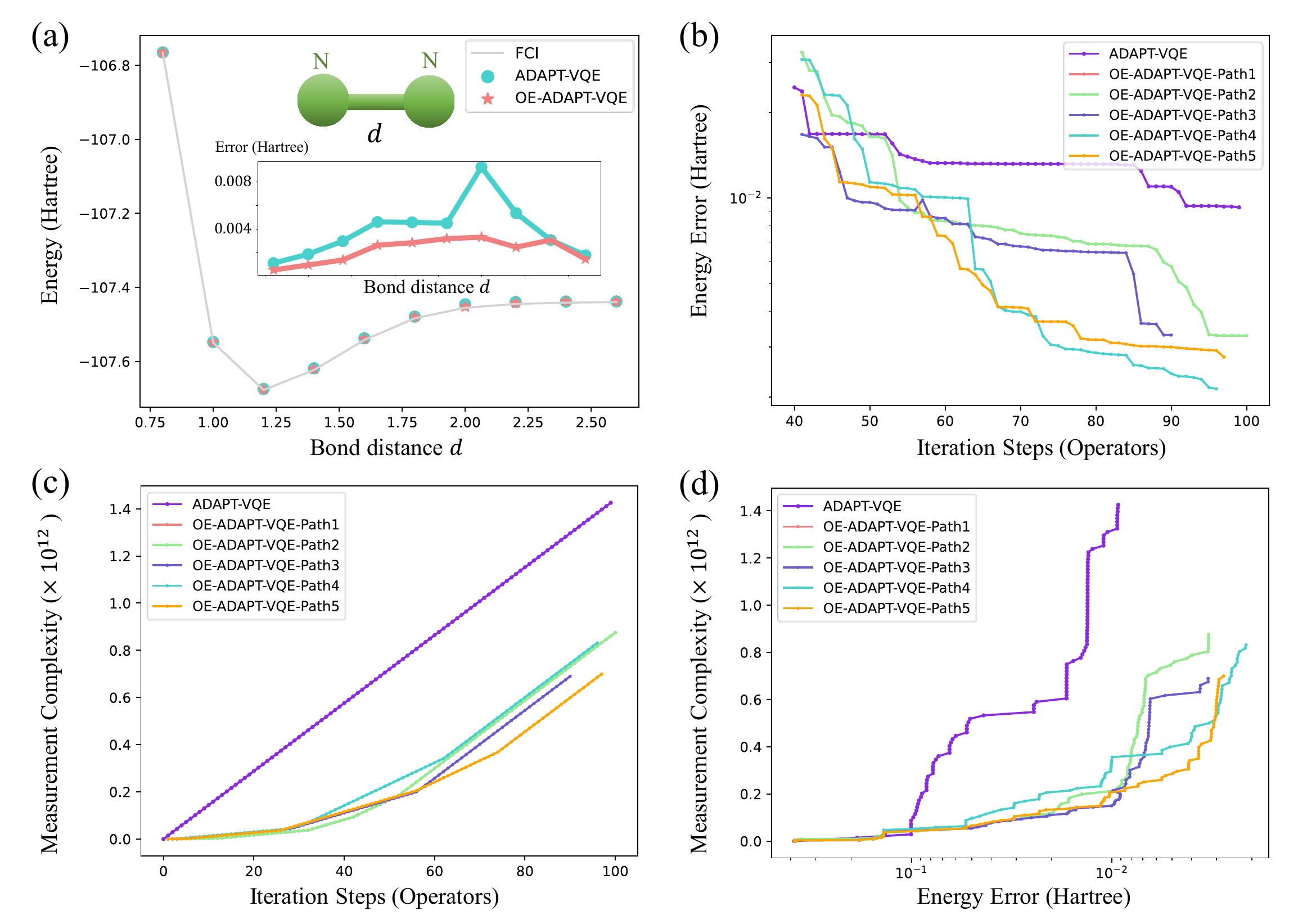}
  \caption{Numerical results on Nitrogen molecule in sto-3g basis. (a) Ground state energy curve provided by OE-ADAPT-VQE and ADAPT-VQE in the range $d\in[0.8, 2.6]$. (b) The relationship between energy error and iteration steps. With the increase of Fermionic operators, OE-ADAPT-VQE convergence to a better solution more rapidly. (c) The rising tendency of required measurement complexity with the increase of iteration steps. (d) Required measurement complexity on different energy error.}\label{fig:N2}
\end{figure*}

\subsection{The Nitrogen molecule}
The Nitrogen molecule is a strong correlation system that troubles coupled cluster methods, where its triple bond is hard to describe accurately and the stretched configuration has several low-lying excited states. These features lead to errors when using single-reference coupled cluster methods~\cite{lyakh2012multireference, pfau2020ab}. Fig.~\ref{fig:N2} provides comparisons between OE-ADAPT-VQE and original ADAPT-VQE in terms of energy error, circuit depth (required operators) and measurement complexity. In the range of $d\in[0.8\mathring{{\rm A}}, 2.6\mathring{{\rm A}}]$, OE-ADAPT-VQE shows much higher accuracy compared with ADAPT-VQE, meanwhile the former method only requires shallower depth circuit and less measurement complexity. We show OE-ADAPT-VQE achieves the same accuracy to ADAPT-VQE with much shallow circuit depth (used fermionic operators) in Table~\ref{tab:circuit_depth}. For the Nitrogen molecule, $2$ bonding orbitals are selected as the impurity, and the rest $8$ spatial orbitals are iteratively appended to the initial subspace. Different energy convergence conditions in each subspace form various convergence paths, and
we test five different convergence paths provided by OE-ADAPT-VQE. Fig.~\ref{fig:N2} (b) shows that systematical improbable paths are much more powerful compared with single gradient-based strategy. Fig.~\ref{fig:N2} (b) and (c) illustrate the OE-ADAPT-VQE dramatically reduces the quantum measurement complexity but achieves much higher accuracy.
In this case, the measurement complexity counts quantum shots in estimating gradient information during appending these $100$ fermionic operators, that is $M_{\rm Adapt}=100\|P(10)\|\epsilon^{-2}$, and $M_{\rm OE-Adapt}=\sum_{i=2}^{10}n_i\|P(i)\|\epsilon^{-2}$, where measurement accuracy $\epsilon=10^{-3}$, $\|P(i)\|$ represents the scale of operator pool with $i$ spatial orbitals, and $n_i$ represents the number of selected operators in the $i$-th subspace. This shows that the OE-VQE framework is a clear improvement over the original ADAPT-VQE for modeling a strongly correlated chemical system. 

\begin{table}[t]
    \centering
    \setlength{\tabcolsep}{3mm}{
    \begin{tabular}{c c c c}
    \hline\hline
        $d$ $(\mathring{{\rm A}})$ & Error & \makecell[c]{OE-ADAPT-VQE \\ (Num of Op)}  &  \makecell[c]{ADAPT~\cite{grimsley2019adaptive, rubin2021fermionic} \\ (Num of Op)}\\
        \hline
        0.8 &   0.00107   &     75      &   100\\
        \hline
        1.0 &   0.00183   &     67      &   100\\
        \hline
        1.2  &  0.00295   &     64      &   100\\
        \hline
        1.4 &   0.00459   &     64      &   100\\
        \hline
        1.6 &   0.00455   &     64      &   100\\
        \hline
        1.8 &   0.00447   &     64      &   100\\
        \hline
        2.0 &   0.00928   &     57      &   100\\
        \hline
        2.2 &   0.00534   &     57      &   100\\
        \hline
        2.4 &   0.00304   &     80      &   100\\
        \hline
        2.6 &   0.00174   &     37      &   100\\
        \hline\hline
    \end{tabular}}
    \caption{Utilized Fermionic Operators comparison in achieving the same energy error for OE-ADAPT-VQE and ADAPT-VQE.} 
    \label{tab:circuit_depth}
\end{table}

\section{Discussion}
With the development of the quantum hardware, there will be more widely interests to demonstrate their computational power for solving complex problems, one central concern is to generate a powerful ansatz with shallow quantum circuit due to the inevitable quantum noise and limited coherence time. In this work, the proposed OE-VQE framework builds an efficient convergence path in solving ground state problems, results in a powerful ansatz with low-depth quantum circuit. To investigate the accuracy of our framework, we studied ground state energies of hydrogen chain, hydrogen ring and nitrogen molecules in the dissociation scenario, which are challenge problems for (unitary) coupled cluster methods. Our numerical results show much higher accuracy compared with the typical quantum eigensolver ADAPT-VQE.

We also analyze the measurement complexity of heuristic quantum eigenslovers. As a particular example, we rigorously proved that the ADAPT-VQE suffers from the barren plateaus phenomenon, results in the energy will be end up in a flat region of the landscape with large probability. This phenomenon suggests that heuristic quantum solvers may require a large amount of measurement complexity when constructing a powerful ansatz. Our analysis shows the OE-VQE framework can relieve this phenomenon to some extent, and thus enhances practicability of popular heuristic quantum eigenslovers.

Meanwhile, the OE-VQE framework leaves room for further explorations. In this work, the entanglement-oriented convergence path is constructed based on the particle exchange number, which is one of the promising methods to rank unentangled environment orbitals. Therefore, further investigation needs to be conducted on testing other possible orbital rank criteria in the OE-VQE framework. Hopefully, there will exist other more advanced criteria in achieving better performance. Another possible perspective is to use post Hartree-Fock methods in a self-consistent way: in each bath extension procedure, the rotated and reformulated orbitals will be revised by the output result from the previous step. Furthermore, find a novel method to compress more correlations in impurity with the help of quantum computation would be interested in practice. 
%is more desirable and SI-VQE may give some inspirations for that.
%But more different criterion could be expect to obtain a more reasonable orbital rank. Orbitals relaxation and rank by post Hartree-Fock are important characters in SI-VQE, that we only perform once in full workflow.
%A self-consistent iterative way to rank and reform the orbitals for obtaining a better performance could be expected. Furthermore, find a way to compress the correlation in impurity is more desirable and SI-VQE may give some inspirations for that.

Finally, we believe that recent quantum technical advances can be improved in both accuracy and efficiency by introducing our OE-VQE framework. One of the promising directions relates to the Density Matrix Embedding Theory~(DMET) method, which intrinsically allows high-level treatments for multiple fragments computation at the same time. The OE-VQE framework may help DMET method solving large-scale molecules and materials with less quantum resources. Another promising direction may employ the OE-VQE framework as the quantum-classical hybrid quantum Monte Carlo~(QMC) solver, for example, a quantum-classical hybrid full configuration interaction QMC~\cite{zhang2022quantum} or auxiliary field QMC~\cite{huggins2022unbiasing}. The performance of these QMC enhanced methods could be influenced dramatically by the initial trail state construct from quantum computers.

%This work attempts to design powerful ansatz for quantum chemical problems but leaves room for further research...

%\HZG{Leveraging advantages of density matrix embedding theory~(DMET), a reasonable active space with chemical interests could be re-formulated as a subspace of qubit space to be solved By VQE. Then ranked by experience of second-order Møller-Plesset perturbation~(MP2),  ansatz could be knitted qubit-by-qubit.}
%reduces the quantum circuit depth and mitigating exponentially quantum measurement complexity

\section{acknowledgement}
The authors would like to thank Changsu Cao, Yifei Huang and Yanqi Song to provide helpful suggestions for the manuscript and Hang Li for support and guidance. Y. Wu is supported by the China Scholarship Council (Grant No.~202006470011).

\bibliography{main.bbl}

\clearpage
\widetext

\appendix

\section{Background Review}
We first consider a warm-up scenario which constructs subspace from full configuration interaction (FCI) level. Suppose the Hamiltonian $\hat{H}_e$ is composed of two parts, a central fragment $A$ with $L_A$ orbitals and an environment $B$ with $L_B$ orbitals. In general, any eigenstate of $\hat{H}_e$ can be expressed by the superposition in $\{|A_i\rangle\otimes|B_j\rangle\}$ with dimension of $L_A\times L_B$, where $|A_i\rangle\in A$ and $|B_j\rangle\in B$:
\begin{align}
    |\Psi_e\rangle=\sum\limits_{i}^{L_A}\sum\limits_j^{L_B}\Psi_{ij}|A_i\rangle|B_j\rangle,
\end{align}
where $\Psi_{ij}$ are complex matrices. Considering to utilize singular value decomposition $\Psi_{ij}=U_{i\alpha}\lambda_{\alpha}V_{\alpha j}^{\dagger}$, $|\Psi_e\rangle$ can be further reduced into
\begin{align}
    |\Psi_e\rangle=\sum\limits_{\alpha}^{L_A}\lambda_{\alpha}|\tilde{A}_{\alpha}\rangle|\tilde{B}_{\alpha}\rangle,
\end{align}
in which $|\tilde{B}_{\alpha}\rangle$ are defined as exact bath orbitals for the fragment space $A$. Under this expression, if $|\Psi_e\rangle$ denotes the ground state of $H_e$ in the space $\mathcal{H}^A\otimes\mathcal{H}^B$, then it is also the ground state of the subspace $\hat{H}_{\rm sub}=P\hat{H}_eP$, where the projector matrix
\begin{align}
    P=\sum\limits_{\alpha,\beta}^{L_A}|\tilde{A}_{\alpha}\tilde{B}_{\beta}\rangle\langle\tilde{A}_{\alpha}\tilde{B}_{\beta}|.
\end{align}
In general, the above construction depends on the exact ground state $|\Psi_e\rangle$ of the entire molecule system $\hat{H}_e$ which is an unrealistic approach in practice. In the main manuscript, we thus consider to match the density matrix of the subspace $\hat{H}_{\rm sub}$ and the entire system $\hat{H}_e$ at the Hartree-Fock level.

\section{Subspace Hierarchy Construction}
Here, we provide technical details of constructing subspace hierarchy. The whole procedure is summarized as Alg.~\ref{alg_subspace_construction}.
\begin{algorithm}[H]
\caption{Subspace Construction}\label{alg_subspace_construction}
\begin{algorithmic}
\Require molecular Hamiltonian $\hat{H}_e$, Slater Determinant $|\Phi_e\rangle$, $L_A$ fragment orbitals, and subspace extension index $N_s$
\Ensure Subspace $\hat{H}_{\rm sub}(N_s)$
\State 1. Calculate density matrix $\bm D^{\rm LO}$ by $|\Phi_e\rangle$;
\State 2. Based on the $L_A$ fragment orbitals, decompose the coefficient transformation matrix $\bm U^{\rm LO\mapsto UEO}$ into 
$$\left(\bm U^{\rm frag(LO\mapsto UEO)}_{L\times L_A},\bm U^{\rm bath(LO\mapsto UEO)}_{L\times L_B},\bm U^{\rm core(LO\mapsto UEO)}_{L\times L_{\rm core}},\bm U^{\rm vir(LO\mapsto UEO)}_{L\times L_{\rm virt}}\right);$$
\State 3. Map fragment and bath orbitals to $\rm UEO$ basis to acquire occupied and unoccupied information:
$$\hat{F}^{\rm UEO}=\left(\bm U^{\rm imp (LO \mapsto UEO)}\right)^{\dagger}\hat{F}^{\rm LO}\bm U^{\rm imp (LO \mapsto UEO)};$$
\State 4. Perform MP2 method on subspaces $\bm U^{\rm occ(LO\mapsto MUEO)}_{L\times L_{\rm occ}}\oplus\bm U^{\rm vir(LO\mapsto UEO)}_{L\times L_{\rm vir}}$ and  $\bm U^{\rm unocc(LO\mapsto MUEO)}_{L\times L_{\rm unocc}}\oplus\bm U^{\rm core(LO\mapsto UEO)}_{L\times L_{\rm core}}$ respectively, and obtain particle number change:
$$\delta\bm\lambda^{\rm env}=\left(\lambda_1^{\rm vir},..., \lambda^{\rm vir}_{L_{\rm vir}},2-\lambda_1^{\rm core},...,2-\lambda^{\rm core}_{L_{\rm core}}\right);$$
\State 5. Using $\delta\bm\lambda^{\rm env}$ to calculate
$$\bm U^{\rm LO\mapsto MUEO}=\left(\bm U^{\rm imp(LO\mapsto MUEO)}_{L\times (L_A+L_B)}, \bm U^{\rm env (LO\mapsto MUEO)}_{L\times(L_{\rm core}+L_{\rm vir})}\right);$$
\State 6. Calculate projector $P(N_s)$;\\
\noindent\Return {$\hat{H}_{\rm sub}(N_s)=P(N_s)\hat{H}_eP^{\dagger}(N_s)$}
\end{algorithmic}
\end{algorithm}
Specifically, the occupied and unoccupied orbitals in the impurity (LO basis) can be calculated by the Fock matrix
% \begin{align}
% \label{eq:Fock_LO}
%     \HZG{\hat{F}^{\rm LO}_{ij}=d_{ij}+\sum\limits_{kl}^LD_{kl}^{\rm HF}\left(\langle ij|lk\rangle-\frac{1}{2}\langle ik|lj\rangle\right).}
% \end{align}
\begin{align}
\label{eq:Fock_LO}
    \hat{F}^{\rm LO}_{ij}=d_{ij}+\sum\limits_{kl}^LD_{kl}^{\rm HF}\left(\langle ij|lk\rangle-\frac{1}{2}\langle ik|lj\rangle\right),
\end{align}
where $\langle ij|lk\rangle$ is the two body term coefficient in Hamiltonian and equal to $h_{ijkl}$. Since the fragment and bath orbitals have been embedded into the UEO basis, it is natural to transform $\hat{F}^{\rm LO}$ to the UEO basis by using the linear map from LO to UEO:
\begin{align}
    \bm U^{\rm imp (LO \mapsto UEO)}=\left(
	       \bm U_{L\times L_A}^{\rm frag (LO\mapsto UEO)}, \bm U_{L\times L_B}^{\rm bath (LO\mapsto UEO)}
	     \right).
\end{align}
Then the Fock matrix in the UEO basis can be calculated by
\begin{align}
   \hat{F}^{\rm UEO}=\left(\bm U^{\rm imp (LO \mapsto UEO)}\right)^{\dagger}\hat{F}^{\rm LO}\bm U^{\rm imp (LO \mapsto UEO)}.
\end{align}
The spectral information of $\hat{F}^{\rm UEO}$ provides the occupied and unoccupied information of the impurity, that is
\begin{align}
    \hat{F}^{\rm UEO}=\bm U^{\rm imp(UEO\mapsto MUEO)\dagger}\left(\sum\limits_{j=1}^{L_A+L_B}\lambda^{I}_j|j\rangle\langle j|\right)\bm U^{\rm imp(UEO\mapsto MUEO)}
\end{align}
where eigenvalues $(\lambda^I_1,...\lambda^I_{L_A+L_B})$ represent energy contributions for each orbital. 

Rank these orbitals according to the energies (from the smallest to the largest), the first $[l/2]$ orbitals are assigned to occupied and the rest orbitals are assigned to unoccupied, where $l$ represents the electron number in the impurity. And their corresponding coefficient matrix can be obtained by
\begin{equation}
\begin{aligned}
    \bm U^{\rm imp(LO\mapsto MUEO)}_{L\times (L_A+L_B)}=\bm U^{\rm imp(LO\mapsto UEO)}_{L\times (L_A+L_B)}\bm U^{\rm imp(UEO\mapsto MUEO)}_{L\times (L_A+L_B)}
    =\left(\bm U^{\rm occ(LO\mapsto MUEO)}_{L\times L_{\rm occ}}, \bm U^{\rm unocc(LO\mapsto MUEO)}_{L\times L_{\rm unocc}}\right).
\end{aligned}
\end{equation}
After that, implement MP2 method in subspaces $\bm U^{\rm occ(LO\mapsto MUEO)}_{L\times L_{\rm occ}}\oplus\bm U^{\rm vir(LO\mapsto UEO)}_{L\times L_{\rm vir}}$ and  $\bm U^{\rm unocc(LO\mapsto MUEO)}_{L\times L_{\rm unocc}}\oplus\bm U^{\rm core(LO\mapsto UEO)}_{L\times L_{\rm core}}$ respectively, then one can analyze the electron excitation from occupied (core) orbitals to virtual (unoccupied) orbitals. 

Utilizing the MP2 method on the above two subspaces, and MP2 method outputs their density matrices
\begin{align}
    \bm D^{\rm occ, vir}=\left(\begin{array}{ll}
	        \bm D^{\rm occ}_{L_{\rm occ}\times L_{\rm occ}} & \bm 0_{L_{\rm occ}\times L_{\rm vir}} \\
	        \bm 0_{L_{\rm vir}\times L_{\rm occ}} &   \bm D^{\rm vir}_{L_{\rm vir}\times L_{\rm vir}}
	     \end{array}
	 \right),
\end{align}
and
\begin{align}
    \bm D^{\rm core, unocc}=\left(\begin{array}{cc}
	        \bm D^{\rm core}_{L_{\rm core}\times L_{\rm core}} & \bm 0_{L_{\rm core}\times L_{\rm unocc}} \\
	        \bm 0_{L_{\rm unocc}\times L_{\rm core}} &   \bm D^{\rm unocc}_{L_{\rm unocc}\times L_{\rm unocc}}\\
	     \end{array}
	     \right).
\end{align}
Since density matrics $\bm D^{\rm core}_{L_{\rm core}\times L_{\rm core}}$ and $\bm D^{\rm vir}_{L_{\rm vir}\times L_{\rm vir}}$ record the electron excitation which can be witnessed by density matrix diagnalization
\begin{align}
    \bm D^{\rm core}_{L_{\rm core}\times L_{\rm core}}=\bm U^{\rm core(UEO\mapsto MUEO)}_{L_{\rm core}\times L_{\rm core}}\bm\lambda^{\rm core}\bm U^{\dagger \rm core(UEO\mapsto MUEO)}_{L_{\rm core}\times L_{\rm core}},
\end{align}
as well as
\begin{align}
     \bm D^{\rm vir}_{L_{\rm vir}\times L_{\rm vir}}=\bm U^{\rm vir(UEO\mapsto MUEO)}_{L_{\rm vir}\times L_{\rm vir}}\bm\lambda^{\rm vir}\bm U^{\dagger \rm vir(UEO\mapsto MUEO)}_{L_{\rm vir}\times L_{\rm vir}},
\end{align}
then particle numbers exchange is obtained:
\begin{align}
    \delta\bm\lambda^{\rm env}=\left(\lambda_1^{\rm vir},..., \lambda^{\rm vir}_{L_{\rm vir}},2-\lambda_1^{\rm core},...,2-\lambda^{\rm core}_{L_{\rm core}}\right).
\end{align}
Rank the particle number changes (from the largest to the smallest), we obtain

\begin{equation}
\begin{aligned}
    \bm U^{\rm env (LO\mapsto MUEO)}_{L\times(L_{\rm core}+L_{\rm vir})}=\bm U^{\rm core (LO\mapsto UEO)}_{L\times L_{\rm core}}\bm U^{\rm core (UEO\mapsto MUEO)}_{L_{\rm core}\times L_{\rm core}} 
    \oplus\bm U^{\rm vir (LO\mapsto UEO)}_{L\times L_{\rm vir}}\bm U^{\rm vir (UEO\mapsto MUEO)}_{L_{\rm vir}\times L_{\rm vir}}.
\end{aligned}
\end{equation}
Finally, the coefficient matrix with particle number exchange information
\begin{align}
    \bm U^{\rm LO\mapsto MUEO}=\left(\bm U^{\rm imp(LO\mapsto MUEO)}_{L\times (L_A+L_B)}, \bm U^{\rm env (LO\mapsto MUEO)}_{L\times(L_{\rm core}+L_{\rm vir})}\right).
\end{align}

Suppose there are $N_{s}$ environment orbitals being selected, the $L\times(L_A+L_B+N_s)$ projector $P(N_s)$ can be further calculated by
\begin{align}
   P(N_s)=\bm U^{\rm LO\mapsto MUEO}_{L\times (L_A+L_B+N_s)}=\left(\bm U^{\rm imp(LO\mapsto MUEO)}_{L\times (L_A+L_B)}, \bm U^{\rm env (LO\mapsto MUEO)}_{L\times N_s}\right),
\end{align}
and the subspace is thus constructed by 
\begin{align}
    \hat{H}_{\rm sub}(N_s)=P(N_s)\hat{H}_eP^{\dagger}(N_s).
\end{align}

\section{Computing Subspace $\hat{H}_{\rm sub}(N_s)$}
\label{App-C}
The subspace $\hat{H}_{\rm sub}(N_s)$ consists of single-electron excitation term and double-electron excitation term, that is
\begin{align}
    \hat{H}_{\rm sub}(N_s)=\hat{H}_{\rm sub}^{\rm oei}(N_s)+\frac{1}{2}\hat{H}_{\rm sub}^{\rm eri}(N_s).
\end{align}
Specifically, the single excitation part
\begin{align}
\label{eq:H_sub_oei}
    \left[\hat{H}^{\rm oei}_{\rm sub}(N_s)\right]_{ij}=\sum\limits_{p,q}P_{ip}(N_s)\left(d_{pq}+(V_{\rm eff})_{pq}\right)P^{\dagger}_{qj}(N_s)
\end{align}
for index $i,j\in[L_A+L_B+N_s]$, where the exchange Coulomb interaction terms
\begin{align}
    (V_{\rm eff})_{pq}=\sum\limits_{r,s}^L D^{\rm core, LO}_{sr}\left(\langle pq|rs\rangle-\frac{1}{2}\langle pr|sq\rangle\right),
\end{align}
and 
% \begin{align}
% \label{eq:D_core}
%     D_{rs}=2\sum\limits_{m=1}^{L_{\rm core}}U_{(\cdot,m)}^{\rm core(LO\mapsto MUEO)}U_{(\cdot,m)}^{\rm core(LO\mapsto MUEO)\dagger},
% \end{align}

\begin{align}
\label{eq:D_core}
    \bm{D}^{\rm core, LO}=2 \bm{U}_{L \times L_{\rm core}}^{\rm core(LO\mapsto MUEO)}\bm{U}_{L \times L_{\rm core}}^{\rm core(LO\mapsto MUEO)\dagger},
\end{align}
where the factor $2$ represents the spatial MUEO has been fully occupied by two electrons.
The double excitation part
\begin{align}
\label{eq:H_sub_eri}
    \left[\hat{H}^{\rm eri}_{\rm sub}\right]_{ijkl}=\sum\limits_{pqrs}P_{ip}(N_s)P_{jq}(N_s)\langle pq|rs\rangle P^{\dagger}_{kr}(N_s)P^{\dagger}_{ls}(N_s).
\end{align}

\section{Ground State Energy Computation}
\label{energy_compute}
The ground state energy of the full system $\hat{H}_{e}$ can be naturally computed by
\begin{align}
    E_g = E_{\rm sub}(N_s) + E_{\rm core} + E_{\rm nuc},
\end{align}
where $E_{\rm sub}(N_s)$ is the ground state energy of the subspace $H_{\rm sub}(N_s)$ provided by the quantum solver, $E_{\rm core}$ comes from the core orbitals if there are still core orbitals in MUEO and $E_{\rm nuc}$ is the nuclear repulsion energy.

The subspace energy $E_{\rm sub}(N_s)$ can be straightforwardly calculated   by the one body reduced density matrix~(1-RDM) and two body reduced density matrix~(2-RDM):
\begin{align}
    E_{\rm sub}(N_s) = \sum_{ij} {\left[\hat{H}^{\rm oei}_{\rm sub}(N_s)\right]_{ij} 
    {}^{\rm 1}D^{\rm sub}_{ij}} + 
    \frac{1}{2} \sum_{ijkl} {\left[\hat{H}^{\rm eri}_{\rm sub}(N_s)\right]_{ijkl} 
    {}^{\rm 2}D^{\rm sub}_{ijkl}},
\end{align}
where the 1-RDM
\begin{align}
   {}^{\rm 1}D^{\rm sub}_{ij}=\langle\Psi(N_s)|\hat{a}_i^{\dagger}\hat{a}_j|\Psi(N_s)\rangle,
\end{align}
and the 2-RDM
\begin{align}
   {}^{\rm 2}D^{\rm sub}_{ijkl}=\langle\Psi(N_s)|\hat{a}_i^{\dagger}\hat{a}_j^{\dagger}\hat{a}_k\hat{a}_l|\Psi(N_s)\rangle.
\end{align}
$\hat{H}^{\rm oei}_{\rm sub}(N_s)$ comes from Eq.\ref{eq:H_sub_oei} and $\hat{H}^{\rm eri}_{\rm sub}(N_s)$ comes from Eq.\ref{eq:H_sub_eri}.

The core energy $E_{\rm core}$ comes from the full occupied orbitals in MUEO basis. In detail, $E_{\rm core}$ could be calculated by using core density matrix $\bm{D}^{\rm core,LO}$~(see Eq.\ref{eq:D_core}) and the modified Fock operator $\hat{F}^{\rm LO}$~(see Eq.\ref{eq:Fock_LO}) in LO basis:
\begin{align}
    E_{\rm core} = \frac{1}{2}
    \sum_{ij}{
    (d_{ij} + \hat{F}^{\rm LO}_{ij})
    D^{\rm core,LO}_{ij}
    }.
\end{align}
Here, the introduced one body term $d_{ij}$ and a factor 1/2 can deal with the double counting problem in the exchange Coulomb interaction terms. It should be noticed that $E_{\rm core}$ is 0 if there are no core orbitals in MUEO.

\section{OE-UCCSD-VQE}
The UCCSD method is a natural quantum analogue of CCSD, which only considers correlations between occupied orbitals and unoccupied orbitals. Here, we show how to utilize OE-VQE framework to improve the performance of UCCSD method. Considering the subspace $\hat{H}_{\rm sub}(N_s)$ contains impurity and $N_s$ extension bath orbitals, and each part have been separated into occupied and unoccupied orbitals by using HF method. Naturally, the OE-UCCSD-VQE can be implemented by using single- and double- excitaion operators from occupied orbitals to unoccupied orbitals. In detail, perform the above operators to the initial state $|\Psi_{\rm init}(N_s)\rangle$, that is
\begin{align}
    |\Psi(N_s)\rangle=\prod\limits_{s\in\{pq\}}e^{\theta_{s}\hat{\tau}_s}\prod\limits_{d\in\{pqrs\}}e^{\theta_d\hat{\tau}_d}|\Psi_{\rm init}(N_s)\rangle,
\end{align}
where orbital indexes $(p,r)\in L_{\rm occ}$, $(q,s)\in L_{\rm unocc}$. The initial state $|\Psi_{\rm init}(N_s)\rangle=U_{\rm dir}(N_s-1)|\Psi_0(N_s)\rangle$, and
\begin{align}
    |\Psi_0(N_s)\rangle=\prod\limits_{m=1}^{L_{\rm occ}}\hat{a}_m^{\dagger}|0^{2L_A}\rangle\otimes\prod\limits_{l=1}^{N_s}\hat{o}_{l}|0^{N_s}\rangle
\end{align}
represents the lowest electron occupied state on the subspace $\hat{H}_{\rm sub}(N_s)$, the operator
\begin{align}
    \hat{o}_{l}=\left\{
    \begin{aligned}
    &\hat{a}_l^{\dagger}, \; {\rm if}\;  l\in{\rm core} \\
    &\hat{I}_l, \; \; {\rm if} \; l\in{\rm virt} 
    \end{aligned}
    \right.
\end{align} 
Then the convergence direction can be updated by minimizing the energy function 
\begin{align}
    \theta=\arg \min\limits_{\theta}\langle\Psi(N_s)|\hat{H}_{\rm sub}(N_s)|\Psi(N_s)\rangle.
\end{align}

\section{Proof of theorem 1}
\begin{lemma}
Consider a set of $L$-orbital Hamiltonian $T_{\bm\lambda}=\{\hat{H}|\hat{H}=UH_0U^{\dagger}\}$, where $U$ are sampled from a $2$-design unitary set, $H_0$ is a $L$-orbital Hamiltonian and $\bm\lambda=(\lambda_1,...\lambda_{2^L})$ represents the eigenvalue set of $H_0$. Then each $\hat{H}\in T_{\bm\lambda}$ shares the same eigenvalue set to $H_0$. 
\end{lemma}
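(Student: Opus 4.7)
The plan is to prove this by an elementary spectral-decomposition argument; the $2$-design structure of the sampling distribution is irrelevant to the conclusion, since the statement is a deterministic property of any unitary conjugation. First I would expand $H_0$ in its spectral form $H_0=\sum_{i=1}^{2^L}\lambda_i|v_i\rangle\langle v_i|$, where $\{|v_i\rangle\}$ is an orthonormal eigenbasis. Then for any $U$ drawn from the $2$-design, I substitute this decomposition into $\hat H=UH_0U^\dagger$ to obtain
\begin{equation}
\hat H \;=\; U\Bigl(\sum_{i=1}^{2^L}\lambda_i|v_i\rangle\langle v_i|\Bigr)U^\dagger \;=\; \sum_{i=1}^{2^L}\lambda_i\,(U|v_i\rangle)(\langle v_i|U^\dagger).
\end{equation}

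Next I would verify that the rotated vectors $|w_i\rangle:=U|v_i\rangle$ still form an orthonormal basis of the $2^L$-dimensional Hilbert space. This follows from $\langle w_i|w_j\rangle=\langle v_i|U^\dagger U|v_j\rangle=\langle v_i|v_j\rangle=\delta_{ij}$ together with $|\{|w_i\rangle\}|=2^L$. Hence the expression above is already a valid spectral decomposition of $\hat H$, exhibiting $\bm\lambda=(\lambda_1,\ldots,\lambda_{2^L})$ as its eigenvalues with eigenvectors $|w_i\rangle$.

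Finally, to argue that the eigenvalue \emph{multiset} is identical (not merely contained), I would invoke the fact that a Hermitian operator on a finite-dimensional space has exactly $2^L$ eigenvalues counted with multiplicity, so the listed $\lambda_i$'s exhaust the spectrum of $\hat H$. This yields the claim for every $\hat H\in T_{\bm\lambda}$, uniformly in the sampled $U$.

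I do not foresee any technical obstacle: the lemma is purely a restatement of unitary invariance of the spectrum, and the only mild care needed is to keep track of multiplicities so that the conclusion is stated as equality of eigenvalue sets rather than just set inclusion. The $2$-design hypothesis plays no role in the argument and will only be used in the downstream application (Theorem~\ref{theorem1}) where expectations over $U$ are computed.
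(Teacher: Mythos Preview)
Your proposal is correct and follows essentially the same route as the paper: write the spectral decomposition $H_0=\sum_i\lambda_i|\lambda_i\rangle\langle\lambda_i|$, conjugate by $U$, and verify $\langle\lambda_i|U^\dagger U|\lambda_j\rangle=\delta_{ij}$ so that $\{U|\lambda_i\rangle\}$ is an orthonormal eigenbasis of $\hat H$ with eigenvalues $\bm\lambda$. Your additional remarks about multiplicities and the irrelevance of the $2$-design hypothesis are accurate and slightly more careful than the paper's version, but the argument is the same.
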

\begin{proof}[Proof of lemma 2]
Suppose the spectrum decomposition of
\begin{align}
    H_0=\sum_{\lambda_i}\lambda_i|\lambda_i\rangle\langle\lambda_i|,
\end{align}
where the spectrum information $\bm\lambda=(\lambda_1,...)$. Then for any $\hat{H}\in T_{\bm\lambda}$,
\begin{align}
   \hat{H}=UH_0U^{\dagger}=\sum\limits_{\lambda_i}\lambda_iU|\lambda_i\rangle\langle\lambda_i|U^{\dagger}.
\end{align}
For any index pair $(i,j)$, the relationship $${\rm{Inner}}(U|\lambda_i\rangle,U|\lambda_j\rangle)=\langle\lambda_i|U^{\dagger}U|\lambda_j\rangle=\delta_{ij}$$
holds, then $\{U|\lambda_i\rangle\}$ formulate an orthogonal basis in the Hilbert space, and $\bm\lambda=(\lambda_1,...)$ is the spectrum information of $UH_0U^{\dagger}$.
\end{proof}
\begin{proof}[Proof of theorem 1]
In the original ADAPT-VQE framework, the gradient of energy function in terms of $\hat{\tau}_k\in P(L)$ can be expressed as
\begin{align}
    \frac{\partial E(\bm\theta)}{\partial\hat{\tau}_k}=\langle\Psi|\left[\hat{H}_e,\hat{\tau}_k\right]|\Psi\rangle,
\end{align}
where $\hat{H}_e$ represents the electron Hamiltonian, $|\Psi\rangle$ represents the reference state and $P(L)$ represents the fermionic operator pool. Then we estimate the scale of 
\begin{equation}
\begin{aligned}
    \mathbb{E}_{\hat{H}\sim S}\left[{\rm Var}_{\hat{\tau}_k\in P(L)}\left(\frac{\partial E(\bm\theta)}{\partial\hat{\tau}_k}\right)\right]&=\mathbb{E}_{\hat{H}\sim S}\left[\mathbb{E}_{\hat{\tau}_k\in P(L)}\left(\frac{\partial E(\bm\theta)}{\partial\hat{\tau}_k}\right)^2-\left(\mathbb{E}_{\hat{\tau}_k\in P(L)}\left(\frac{\partial E(\bm\theta)}{\partial\hat{\tau}_k}\right)\right)^2\right]\\
    &=\mathbb{E}_{\hat{H}\sim S}\left[\mathbb{E}_{\hat{\tau}_k\in P(L)}\left(\frac{\partial E(\bm\theta)}{\partial\hat{\tau}_k}\right)^2\right]-\mathbb{E}_{\hat{H}\sim S}\left[\left(\mathbb{E}_{\hat{\tau}_k\in P(L)}\left(\frac{\partial E(\bm\theta)}{\partial\hat{\tau}_k}\right)\right)^2\right]
\end{aligned}
\end{equation}
which characterize the fluctuations of gradient $\frac{\partial E(\bm\theta)}{\partial\hat{\tau}_k}$ around the mean-value $\mathbb{E}_{\hat{\tau}_k\in P(L)}\left(\frac{\partial E(\bm\theta)}{\partial\hat{\tau}_k}\right)$ for $\hat{H}_e\in S$. Here, $S=\{\hat{H}|{\rm Tr}(\hat{H})\leq {\rm poly}(L)\}$ is a designed electron Hamiltonian set. According to Lemma 2, we can construct $S$ based on different spectrum distribution, that is
\begin{align}
    S=\bigcup_{\bm\lambda}T_{\bm \lambda}(\hat{H}).
\end{align}
Each set $T_{\bm \lambda}(\hat{H})$ contains Hamiltonians $\hat{H}=V^{\dagger}(\hat{H})\hat{H}_0(\bm\lambda)V(\hat{H})$ with the same spectrum $\bm\lambda$, and $V$ belongs to a $2$-design unitary group $\rm U_2.$ 

We first consider the value of 
\begin{equation}
\begin{aligned}
 &\mathbb{E}_{\hat{H}\sim S}\left(\mathbb{E}_{\hat{\tau}_k\in P(L)}\left(\frac{\partial E(\bm\theta)}{\partial\hat{\tau}_k}\right)\right)^2= \mathbb{E}_{\hat{H}\sim S}\left[\frac{1}{\|P(L)\|^2}\sum\limits_{\hat{\tau}_1,\hat{\tau}_2}\langle\Psi|\left[\hat{\tau}_1,\hat{H}\right]|\Psi\rangle\langle\Psi|\left[\hat{\tau}_2,\hat{H}\right]|\Psi\rangle\right]\\
 &=\frac{1}{\|P(L)\|^2}\sum\limits_{\hat{\tau}_1,\hat{\tau}_2}\mathbb{E}_{\hat{H}\sim S}\left[\langle\Psi|\left[\hat{\tau}_1,\hat{H}\right]|\Psi\rangle\langle\Psi|\left[\hat{\tau}_2,\hat{H}\right]|\Psi\rangle\right]\\
 &=\frac{1}{\|P(L)\|^2}\sum\limits_{\hat{\tau}_1,\hat{\tau}_2}\mathbb{E}_{\hat{H}\sim S}\left[\langle\Psi|\hat{\tau}_1\hat{H}|\Psi\rangle\langle\Psi|\hat{\tau}_2\hat{H}|\Psi\rangle-\langle\Psi|\hat{\tau}_1\hat{H}|\Psi\rangle\langle\Psi|\hat{H}\hat{\tau}_2|\Psi\rangle-\langle\Psi|\hat{H}\hat{\tau}_1|\Psi\rangle\langle\Psi|\hat{\tau}_2\hat{H}|\Psi\rangle+\langle\Psi|\hat{H}\hat{\tau}_1|\Psi\rangle\langle\Psi|\hat{H}\hat{\tau}_2|\Psi\rangle\right].
\end{aligned}
\end{equation}

\begin{equation}
\begin{aligned}
&\mathbb{E}_{\hat{H}\in S}\left[\langle\Psi|\left[\hat{\tau}_1,\hat{H}\right]|\Psi\rangle\langle\Psi|\left[\hat{\tau}_2,\hat{H}\right]|\Psi\rangle\right]\\&=\bigcup\limits_{\bm\lambda}\mathbb{E}_{V}\left[{\rm Tr}\left(V(|\Psi\rangle\langle\Psi|\hat{\tau}_1)V^{\dagger}H_0\right){\rm Tr}\left(V(|\Psi\rangle\langle\Psi|\hat{\tau}_2)V^{\dagger}H_0\right)\right]
+\mathbb{E}_{V}\left[{\rm Tr}\left(V(\hat{\tau}_1|\Psi\rangle\langle\Psi|)V^{\dagger}H_0\right){\rm Tr}\left(V(|\Psi\rangle\langle\Psi|\hat{\tau}_2|)V^{\dagger}H_0\right)\right]\\
&-\mathbb{E}_{V}\left[{\rm Tr}\left(V(|\Psi\rangle\langle\Psi|\hat{\tau}_1)V^{\dagger}H_0\right){\rm Tr}\left(V(|\Psi\rangle\langle\Psi|\hat{\tau}_2)V^{\dagger}H_0\right)\right]
+\mathbb{E}_{V}\left[{\rm Tr}\left(V(|\Psi\rangle\langle\Psi|\hat{\tau}_1)V^{\dagger}H_0\right){\rm Tr}\left(V(\hat{\tau}_2|\Psi\rangle\langle\Psi|)V^{\dagger}H_0\right)\right].
\label{eq:exp1}
\end{aligned}
\end{equation}
Then the first term in~\ref{eq:exp1} can be further calculated by
\begin{equation}
\begin{aligned}
&\mathbb{E}_{V}\left[{\rm Tr}\left(V(|\Psi\rangle\langle\Psi|\hat{\tau}_1)V^{\dagger}H_0\right){\rm Tr}\left(V(|\Psi\rangle\langle\Psi|\hat{\tau}_2)V^{\dagger}H_0\right)\right]=\int_{V\in {\rm U}_2}{\rm d}\mu(V){\rm Tr}\left(V(|\Psi\rangle\langle\Psi|\hat{\tau}_1)V^{\dagger}H_0\right){\rm Tr}\left(V(|\Psi\rangle\langle\Psi|\hat{\tau}_2)V^{\dagger}H_0\right)\\
&=\frac{1}{d(d+1)}\left(\langle\Psi|\hat{\tau}_1|\Psi\rangle\langle\Psi|\hat{\tau}_2|\Psi\rangle{\rm Tr}^2(H_0)+\langle\Psi|\hat{\tau}_1|\Psi\rangle\langle\Psi|\hat{\tau}_2|\Psi\rangle+{\rm Tr}(H_0^2)\right).
\end{aligned}
\end{equation}
Similarly, the other three terms can be calculated by using properties of the group $\rm U_2$. Since $\hat{\tau}_1,\hat{\tau}_2\in\{(\hat{a}_q^{\dagger}\hat{a}_p-\hat{a}_p^{\dagger}\hat{a}_q)\}\cup\{(\hat{a}_p^{\dagger}\hat{a}_q^{\dagger}\hat{a}_r\hat{a}_s-\hat{a}_s^{\dagger}\hat{a}_r^{\dagger}\hat{a}_q\hat{a}_p)\}$ and conical fermionic anti-commutation relations $\{\hat{a}_p^{\dagger},\hat{a}_q\}=\delta_{pq}$, $\{\hat{a}_p,\hat{a}_q\}=\{\hat{a}^{\dagger}_p,\hat{a}^{\dagger}_q\}=0$,
the relationship $[\hat{\tau}_1,\hat{\tau}_2]=0$ holds, and thus
\begin{equation}
\begin{aligned}
    \mathbb{E}_{\hat{H}\sim S}\left[\mathbb{E}_{\hat{\tau}_k\in P(L)}\left(\frac{\partial E(\bm\theta)}{\partial\hat{\tau}_k}\right)\right]^2=0.
\end{aligned}
\end{equation}
Then we consider another term
\begin{equation}
\begin{aligned}
    \mathbb{E}_{\hat{H}\in S}\left[\mathbb{E}_{\hat{\tau}\in P(L)}\left(\frac{\partial E(\bm\theta)}{\partial\hat{\tau}_k}\right)^2\right]&=\frac{1}{\|P(L)\|^2}\sum\limits_{\hat{\tau}_k}\mathbb{E}_{\hat{H}\in S}\left[\langle\Psi|\left[\hat{H}_e,\hat{\tau}_k\right]|\Psi\rangle^2\right].
\end{aligned}
\end{equation}
Using the relationship $\hat{H}=V^{\dagger}(\hat{H})\hat{H}_0V(\hat{H})$,
\begin{equation}
\begin{aligned}
\mathbb{E}_{\hat{H}\in S}\left[\langle\Psi|\left[\hat{H}_e,\hat{\tau}_k\right]|\Psi\rangle^2\right]&=\mathbb{E}_{V}\left[{\rm Tr}^2\left(V(\hat{\tau}_k|\Psi\rangle\langle\Psi|)V^{\dagger}H_0\right)\right]+\mathbb{E}_{V}\left[{\rm Tr}^2\left(V(|\Psi\rangle\langle\Psi|\hat{\tau}_k)V^{\dagger}H_0\right)\right]\\
&-2\mathbb{E}_V\left[{\rm Tr}\left(V(|\Psi\rangle\langle\Psi|\hat{\tau}_k)V^{\dagger}H_0\right){\rm Tr}\left(V(\hat{\tau}_k|\Psi\rangle\langle\Psi|)V^{\dagger}H_0\right)\right].
\end{aligned}
\end{equation}
Since the unitary transformation does not change the spectral distribution for all $\hat{H}\in S_{\bm\lambda}(\hat{H})$, then
\begin{equation}
\begin{aligned}
&\mathbb{E}_{V}\left[{\rm Tr}^2\left(V(\hat{\tau}_k|\Psi\rangle\langle\Psi|)V^{\dagger}H_0\right)\right]+\mathbb{E}_{V}\left[{\rm Tr}^2\left(V(|\Psi\rangle\langle\Psi|\hat{\tau}_k)V^{\dagger}H_0\right)\right]\\
&=\int_{V\in {\rm U}_2}{\rm d}\mu(V){\rm Tr}^2\left(V(\hat{\tau}_k|\Psi\rangle\langle\Psi|)V^{\dagger}H_0\right)+\int_{V\in {\rm U}_2}{\rm d}\mu(V){\rm Tr}^2\left(V(|\Psi\rangle\langle\Psi|\hat{\tau}_k)V^{\dagger}H_0\right)\\
&=\frac{2}{d^2-1}\left(\langle\Psi|\hat{\tau}_k|\Psi\rangle^2{\rm Tr}^2(H_0)+\langle\Psi|\hat{\tau}_k|\Psi\rangle^2+{\rm Tr}(H_0^2)\right)
-\frac{2}{d(d^2-1)}\left(\langle\Psi|\hat{\tau}_k|\Psi\rangle^2{\rm Tr}^2(H_0)+\langle\Psi|\hat{\tau}_k|\Psi\rangle^2{\rm Tr}(H_0^2)\right),
\end{aligned}
\end{equation}
and 
\begin{equation}
\begin{aligned}
&2\mathbb{E}_V\left[{\rm Tr}\left(V(|\Psi\rangle\langle\Psi|\hat{\tau}_k)V^{\dagger}H_0\right){\rm Tr}\left(V(\hat{\tau}_k|\Psi\rangle\langle\Psi|)V^{\dagger}H_0\right)\right]\\
&=2\int_{V\in {\rm U}_2}{\rm d}\mu(V){\rm Tr}\left(V(|\Psi\rangle\langle\Psi|\hat{\tau}_k)V^{\dagger}H_0\right){\rm Tr}\left(V(\hat{\tau}_k|\Psi\rangle\langle\Psi|)V^{\dagger}H_0\right)\\
&=\frac{2}{d^2-1}\left(\langle\Psi|\hat{\tau}_k|\Psi\rangle^{2}{\rm Tr}^2(H_0)+\langle\Psi|\hat{\tau}^2_k|\Psi\rangle+{\rm Tr}(H_{0}^{2})\right)-\frac{2}{d(d^2-1)}\left(\langle\Psi|\hat{\tau}^2_k|\Psi\rangle{\rm Tr}^2(H_0)+\langle\Psi|\hat{\tau}_k|\Psi\rangle^2{\rm Tr}(H_{0}^{2})\right),
\end{aligned}
\end{equation}
where $\mu(V)$ represents the Haar-measure in the $2$-design unitary group $\rm U_2$. Combine the above two equations, we can upper bound
\begin{equation}
\begin{aligned}
\mathbb{E}_{\hat{H}\in S}\left[\langle\Psi|\left[\hat{H}_e,\hat{\tau}_k\right]|\Psi\rangle^2\right]=\frac{\Delta(\hat{\tau}_k,|\Psi\rangle)}{4^n-1}\left(1-\frac{{\rm Tr}^2(H_0)}{2^n}\right)\leq\frac{\Delta(\hat{\tau}_k,|\Psi\rangle)}{4^n-1}.
\end{aligned}
\end{equation}
Here, $\Delta(\hat{\tau}_k,|\Psi\rangle)=\langle\Psi|\hat{\tau}_k|\Psi\rangle^2-\langle\Psi|\hat{\tau}_k^2|\Psi\rangle$ represents the fluctuation on using $\hat{\tau}_k$ to measure $|\Psi\rangle$. 

Since the anti-hermitian operator $\hat{\tau}_k$ acts on different orbital domains to characterize annihilation and creation behaviours, and it can be encoded into linear combinations of $n$-qubit Pauli operators, that is $\hat{\tau}_k\mapsto i\sum_{j=1}^mP_j^{k}$, where $m\in\{4, 8\}$. Therefore
\begin{equation}
\begin{aligned}
\Delta(\hat{\tau}_k,|\Psi\rangle)&=\sum\limits_{j_1,j_2=1}^m\left(\langle\Psi|P_{j_1}^kP_{j_2}^k|\Psi\rangle-\langle\Psi|P_{j_1}^k|\Psi\rangle\langle\Psi|P_{j_2}^k|\Psi\rangle\right)< 2m^2.
\end{aligned}
\end{equation}
The last inequality comes from the fact that $P_{j_1}^kP_{j_2}^k, P_{j_1}^k$ and $P_{j_2}^k$ are Pauli operators, and their eigenvalues thus belongs to $\{-1, +1\}$. Therefore, $\langle\Psi|P_{j_1}^kP_{j_2}^k|\Psi\rangle-\langle\Psi|P_{j_1}^k|\Psi\rangle\langle\Psi|P_{j_2}^k|\Psi\rangle<2$ holds.
\end{proof}

\section{Proof of Corollary 1.1}
\begin{proof}
Given an electron Hamiltonian $\hat{H}_e$ with $L$ orbitals, and $M_{\rm OE-Adapt}$ ($M_{\rm Adapt}$) represent the quantum measurement complexity of OE-ADAPT-VQE (ADAPT-VQE). Suppose there are $K$ fermionic operators are required to construct the ground state, then according to Theorem~\ref{theorem1}, one has $M_{\rm adapt}= \mathcal{O}\left(KL^44^L\right)$. On the other hand,
\begin{equation}
\begin{aligned}
    M_{\rm OE-adapt}&=\sum\limits_{i=L_A+L_B}^L\left(\frac{K}{L-L_A-L_B}\right)i^44^{i}\\
    &\leq\frac{K}{L-L_A-L_B}\int_{L_A+L_B}^Lx^4e^{\ln 4 x}dx\\
    &<\frac{KL^4}{(L-L_A-L_B)\ln 4}4^L=\frac{M_{\rm adapt}}{(L-L_A-L_B)\ln 4}.
\end{aligned}
\end{equation}
\end{proof}

\end{document}